\documentclass[journal]{IEEEtran}
\normalsize

\usepackage{amsthm}
\usepackage{graphicx}
\usepackage{verbatim}
\usepackage{amssymb,amsmath}
\usepackage{mathtools}
\usepackage{color}
\usepackage[font=small]{caption}
\usepackage{epstopdf}
\usepackage{bm}
\usepackage[nosort]{cite}
\usepackage{array,color}
\usepackage{mathtools}
\usepackage[ruled,linesnumbered]{algorithm2e}
\DontPrintSemicolon
\usepackage{algpseudocode}
\usepackage{amsmath}

\newtheorem{Theorem}{\textbf{Theorem}}
\newtheorem{Remark}{\textbf{Remark}}
\usepackage{graphics}
\usepackage{epsfig}
\usepackage[normalem]{ulem}
\usepackage{cite}
\usepackage{balance}
\usepackage[square,numbers]{natbib}
\usepackage{setspace}
\usepackage{subfigure}
\usepackage{multirow}
\usepackage{multicol}
\usepackage{lipsum}
\usepackage{cuted}

\begin{document}

\title{Rateless DeepJSCC for Broadcast Channels: a Rate-Distortion-Complexity Tradeoff}

\author{
    Zijun Qin, 
    Jingxuan Huang, 
    Zesong Fei, 
    Haichuan Ding, 
    Yulin Shao, 
    and
    Xianhao Chen

    \thanks{Z. Qin, Y. Shao and X. Chen are with the Department of Electrical and Computer Engineering, University of Hong Kong, Pok Fu Lam, Hong Kong S.A.R., China (e-mail: zjun.qin@outlook.com, ylshao@hku.hk, xchen@eee.hku.hk).}
	\thanks{J. Huang and Z. Fei are with the School of Information and Electronics, Beijing Institute of Technology, Beijing, China (e-mail: feizesong@bit.edu.cn, jxhbit@gmail.com).}
    \thanks{H. Ding is with the School of Cyberspace Science and Technology, Beijing Institute of Technology, Beijing, China (e-mail: hcding@bit.edu.cn).}

    \thanks{(\emph{Corresponding author: Xianhao Chen})}
}

\maketitle

{
\begin{abstract}
In recent years, numerous data-intensive broadcasting applications have emerged at the wireless edge, calling for a flexible tradeoff between distortion, transmission rate, and processing complexity.
While deep learning-based joint source-channel coding (DeepJSCC) has been identified as a potential solution to data-intensive communications, most of these schemes are confined to worst-case solutions, lack adaptive complexity, and are inefficient in broadcast settings. To overcome these limitations, this paper introduces nonlinear transform rateless source-channel coding (NTRSCC), a variable-length JSCC framework for broadcast channels based on rateless codes. In particular, we integrate learned source transformations with physical-layer LT codes, develop unequal protection schemes that exploit decoder side information, and devise approximations to enable end-to-end optimization of rateless parameters. Our framework enables \textit{heterogeneous} receivers to adaptively adjust their received number of rateless symbols and decoding iterations in belief propagation, thereby achieving a controllable tradeoff between distortion, rate, and decoding complexity. Simulation results demonstrate that the proposed method enhances image broadcast quality under stringent communication and processing budgets over heterogeneous edge devices.
\end{abstract}

\begin{IEEEkeywords}
Joint source-channel coding, rateless coding, unequal error protection, nonlinear transform coding.
\end{IEEEkeywords}

\section{Introduction}

In emerging real-time applications at the wireless edge, environmental information, such as camera or LiDAR data, often needs to be \textit{broadcast} to heterogeneous edge devices, including vehicles, robots, and drones. One example is infrastructure-assisted autonomous driving \cite{10449899}, where a roadside unit broadcasts data-intensive sensor streams (e.g., LiDAR point clouds or camera feeds from an elevated vantage point) to nearby vehicles, allowing them to extend their perception beyond their own limited fields of view.
Although advanced methods exist to significantly compress sensory data~\cite{shi_vips_2022}, and modern channel codes can approach channel capacity at asymptotic lengths \cite{1-b7} \cite{1-b37}, the issue of efficient and adaptive broadcast over heterogeneous networks remains an open challenge.
Traditional separation-based source and channel coding works effectively only under infinite block lengths, and fails to exploit the potential correlation between data and channel. In practical scenarios, it fails to flexibly adapt compression and protection levels to different receivers and data patterns, leading to increased delay, high complexity, and poor fidelity as the channel degrades. Conversely, joint source–channel coding (JSCC) provides a principled way to balance these competing objectives. By adaptively compressing and protecting task-relevant information according to channel conditions, JSCC preserves the quality of received information even under stringent communication constraints.

Leveraging the data-driven nature of deep neural networks, a plethora of DeepJSCC methods have been proposed in recent years to further improve the transmission of multimedia sources. By learning from high-dimensional data patterns, DeepJSCC often exhibits a much more graceful degradation under worsening channel conditions compared with classic JSCC based on hand-crafted mappings, especially under finite code lengths. 
A pioneering work on DeepJSCC was proposed in \cite{8723589}, where the image pixel values are mapped to the complex-valued channel input symbols using an autoencoder structure. Subsequently, DeepJSCC schemes were extended to encompass text \cite{8-b2}, speech \cite{Yao2025} and video \cite{8-b5} sources, outperforming classic source-channel coding schemes.
Recently, the focus has shifted toward practical deployment challenges, such as adapting to diverse data and channel conditions while reducing computational overhead. The authors of \cite{8-b14} proposed an adaptive modulation and retransmission system that utilizes soft combining to improve the robustness of semantic tasks. In \cite{8-b16}, Shi et al. presented a rate-adaptive image DeepJSCC system by modeling semantic relations via a probabilistic graph over cross-attention mechanisms. 
In \cite{8-b17}, Shao et al. proposed an information bottleneck architecture with sparsity-inducing prior for adaptive feature pruning and utilized a dynamic neural network structure to accommodate different channel conditions.
Hu et al. \cite{10095672} presented a scalable DeepJSCC scheme for images based on the ranking of feature importance, improving the performance of intelligent tasks in the low SNR regime.
Wang et al. \cite{wang_resilient_2026} presented a class of reconfigurable neural networks that dynamically select the number of encoder and decoder blocks to address different channel conditions. Qin et al. \cite{MG-VIB} explored a novel class of information bottlenecks that improve scalability in broadcast settings.

Nonetheless, the performance of these methods degrades significantly when broadcasting to heterogeneous receivers. Consider, for instance, a base station utilizing a single encoder to broadcast information to edge devices. In such scenarios, users possess diverse communication and processing budgets. Prior DeepJSCC schemes are inherently ill-suited to this scenario due to the substantial cost of retransmissions and the need to conform to the weakest users in the wireless network. Moreover, they often require a scenario-specific encoder-decoder pair tailored to receivers' communication-computing capabilities, which cannot perform well across heterogeneous users without redesigning user-specific decoders. In summary, prior DeepJSCC schemes face the following fundamental challenges in broadcast channels:

\begin{itemize}
		{ \item \textit{Challenge 1: Underserving strong users.} The transmission bit length and processing complexity must be low enough to satisfy the most stringent individual communication-computing requirements, severely underserving other strong users upon broadcasting.}
		\item \textit{Challenge 2: High retransmission cost.} The transmitter must rebroadcast erroneous data to all users, implying that many users could potentially receive source information already obtained without improving performance.
		\item \textit{Challenge 3: Implicit rate-distortion-complexity tradeoff.} Edge devices experience diverse channel conditions and possess heterogeneous computing capabilities, thereby demanding a flexible rate-distortion-complexity tradeoff. However, due to the data-driven nature, most learning-based JSCC schemes cannot balance distortion, data rate, and processing complexity with theoretical insights. Instead, they depend solely on the static model architecture and trained parameters, which cannot generalize beyond the scenarios they were trained for.
\end{itemize}

In this paper, we address those challenges by designing a novel rate-adaptive JSCC scheme over the BIAWGN channel. Our goal is to design a source-channel coding framework with rateless properties (addressing Challenge 1 and 2) and a controllable rate-distortion-complexity tradeoff (Challenge 3), thereby enabling heterogeneous receivers, with different channel conditions and processing capabilities, to \textit{decode signals at varying quality levels and complexity from a single broadcast transmission}. Unfortunately, existing DeepJSCC schemes cannot achieve rateless transmission over wireless channels, whereas classic source-channel rateless coding schemes, which are often rooted in asymptotic analysis, cannot be optimized with deep learning modules and struggle to handle finite code lengths. To fill this research gap, our framework integrates {nonlinear transform coding (NTC)} with physical-layer Luby Transform (LT) codes. Specifically, the system learns a hyperprior-based side information to estimate bit likelihoods, which dynamically guide the encoding degree distributions and message selection probabilities of rateless symbols to preserve critical information. Since the transmitter generates a variable-length stream of coded symbols for broadcasting, each edge device can simply stop receiving symbols and adjust its belief propagation (BP) decoding iterations to match its specific channel condition and computing budget. The main contributions of this paper are summarized as follows:
\begin{itemize}
	\item We propose nonlinear transform rateless source-channel coding (NTRSCC), a variable-length JSCC framework based on Luby transformation (LT) codes that seamlessly incorporates learned nonlinear transforms and prior beliefs, providing a flexible and explicit tradeoff between distortion, transmission rate, and decoding complexity.
	
	\item We explore the parameter design of rateless codes when side information for each bit is available at the decoder, deriving unequal error protection strategies that guarantee successful decoder initiation while ensuring the informativeness of individual rateless symbols.
	
	\item We provide implementation and training details for NTRSCC, including approximation methods for rateless codes that allow its joint optimization with neural network modules in an end-to-end fashion.
	
	\item We perform numerical experiments across different image datasets and tasks to validate the effectiveness of our proposed system. Results show that NTRSCC could improve inference robustness, enable flexible performance scaling, and reduce total bandwidth under multicast settings.
	
\end{itemize}

The remainder of this paper is organized as follows. Section II formally describes the problem and reviews related works on rateless coding and nonlinear transform coding. Section III presents the schematic design of NTRSCC, and formulates its optimization from a variational perspective. Section IV investigates the parameter design of rateless codes when side information is present at the decoder. Section V presents numerical results and examples of our proposed method. Section VI concludes our paper. The key {notations} used in our paper are summarized in Table \ref{table0}. }

\begin{table}[tp]
	\centering
	\caption{{List of important notations in this paper}}
	\label{table0}
	\begin{small}
		\setlength{\extrarowheight}{3pt}
		\begin{tabular}{|c|c|}
			\hline
			Variable               & Definition                             \\ \hline
			$\mathbf{x}$           & Source image                           \\ \hline
			$\bf{y}, {\bf{b}}$ & Latent representation and quantized vector \\ \hline
			$\bf{z}, \bar{\bf{z}}, \tilde{\bf{z}}$ & Hyperprior latent variables and quantized / relaxed version \\ \hline
			$\mathbf{v}$           & Rateless coded symbols                  \\ \hline
			$\mathbf{u}$           & Channel input symbols                  \\ \hline
			$\hat{\mathbf{u}}_j$   & Channel output symbols at user $j$     \\ \hline
			$\hat{\mathbf{x}}_j$   & Reconstructed image at user $j$        \\ \hline
			$n_j$                  & Number of transmitted bits at user $j$ 			    \\ \hline
			$\sigma_j^2$           & Channel noise variance at user $j$     \\ \hline
			$\bm{\Omega}$          & Degree distribution of LT code 		\\ \hline
			$\bm{\rho}$                 & Message selection probability of LT code  \\ \hline
			$m_{i,o}^{(l)}$        & BP message from input $i$ to output $o$ \\ \hline
			$m_{o,i}^{(l)}$        & BP message from output $o$ to input $i$ \\ \hline
			$\mathcal{M}_i^{(l)}$  & Marginal log-likelihood of input $i$   \\ \hline
			$g_a, g_s$             & Nonlinear analysis and synthesis transforms \\ \hline
			$h_a, h_s$             & Hyperprior analysis and synthesis transforms \\ \hline
			$\bm{\eta}_j$          & Number of iterations at user $j$           \\ \hline
			$\bm{\gamma}_j$        & Scaling parameter for transmission length at user $j$           \\ \hline
			$K$                    & Number of users                        \\ \hline
			$N$                    & Maximum number of transmitted symbols per channel        \\ \hline
			$d_{max}$              & Maximum encoding degree \\ \hline
			$c$                    & Number of latent feature channels                        \\ \hline
			$k$                    & Length of quantized features per channel                        \\ \hline
		\end{tabular}
	\end{small}
\end{table}

{\section{Problem Description and {Preliminaries}}}

{\subsection{Problem Description}

Consider a central server which aims to send the features of image $\bf{x}$ to $K$ downlink edge users via broadcasting, where all users are interested in recovering the original image $\bf{x}$.
During transmission, the $j$-th user is connected to the server via an additional white Gaussian noise (AWGN) channel, receives channel symbol ${{\bf{\hat u}}}_j$, then infers a distorted version of the source image ${\bf{\hat x}}_j$. The above random variables constitute a {probabilistic model ${\bf x}\rightarrow{\bf u}\rightarrow{\bf \hat u}_j\rightarrow{\bf \hat x}_j$}.
We characterize the communication delay at each user by $n_j$ which is the number of received bits at the user, and characterize the processing delay at each user by ${\rm{Com}}{{\rm{p}}_j}$ which denotes computational complexity.
Due to the heterogeneous nature of the network, both the channel noise level, the acceptable transmission delay and the acceptable processing delay are all stochastic, e.g. $\sigma _j^2 \sim \pi \left( {{\sigma ^2}} \right)$, $T_j^{\left( 1 \right)} \sim \pi \left( {T_j^{\left( 1 \right)}} \right)$, $T_j^{\left( 2 \right)} \sim \pi \left( {T_j^{\left( 2 \right)}} \right)$.

Specifically, we aim to minimize the distortion at edge users $D({{{\bf{\hat x}}}_j};{\bf{x}})$, while satisfying the delay constraints of individual users. The problem could be formally defined as
\begin{align}\label{eq3}
	\min~ &\sum\limits_{j = 1}^K {D({{{\bf{\hat x}}}_j}; {\bf{x}})/K}, \notag\\
	\text{s.t. }\, &{n_j} \le T_j^{\left( 1 \right)}, {\text{Comp}_j} \le T_j^{\left( 2 \right)},~j = 1, 2, ..., K.
\end{align}
Here we define the maximum tolerable communication and processing delay as $T_1$ and $T_2$ respectively, both are constants determined by the transmission protocol.

Existing methods are unable to adjust transmission rate or complexity to accommodate heterogeneous delay budgets, therefore they will have to introduce enough slackness by drastically reducing the working number of bits $n_j$ and computational complexity ${\rm{Com}}{{\rm{p}}_j}$. This will greatly impact the distortion performance, which grows disproportionally with rate and complexity when retransmission is adopted.
Our solution to the above challenges is to \textit{combine learning-based source compression with physical layer rateless codes}, a class of rate-adaptive error correction codes that allows seamless rate adaptation, easy control over decoding complexity, and suitable for efficient broadcast.

\subsection{{Preliminaries 1:} Rateless Coding over BIAWGN Channels}

To help readers better understand the background of our work, we briefly introduce the system model for rateless channel coding.

Rateless codes are a class of error-control codes designed to approximate the ``digital fountain'' appeal proposed by Byers et. al \cite{5-b1}, which aims to eliminate the need for retransmissions while ensuring reliable data broadcast over an erasure channel. Consider a server transmitting a message consisting of a sequence of $k$ equal length packets to a {set} of clients. A client accepting any $n$ coded symbols from the channel could recover the original message with no error, where $n$ is a random variable slightly greater than $k$. The stream of coded symbols produced by the server in this solution is called a ``digital fountain''.
Since then, many good rateless channel codes have been proposed, e.g. LT codes \cite{5-b2} and raptor codes \cite{5-b3}, and their extensions to wireless channels have been extensively studied \cite{xu_optimization_2016} \cite{etesami_raptor_2006} \cite{sorensen_ripple_2014}.

For illustration, we present the encoding and decoding processes of an LT code over the BIAWGN channel. Assume that the transmitter has $k$ {message symbols} that needs to be transmitted. The LT encoder generates a stream of potentially infinite {coded symbols} by performing XOR operations on randomly chosen source symbols. We define the {degree} $d$ of a coded symbol as the number of source symbols that is used in the XOR operation. The degrees of output symbols follow a distribution ${\Omega}\left( d \right),~d = 1, 2, \ldots, k$, the design of which is crucial to the performance of LT codes. In certain cases involving unequal error protection (UEP), the encoder also utilizes a message selection probability $\rho (i),~i = 1, 2, \ldots, k$, which determines the selection probability of the $i$-th message symbol when forming each coded symbol.

The encoding process of LT codes can be summarized as follows. 
\begin{enumerate}
	\item Sample degree $d \sim {\Omega}\left( d \right)$, $d=1, 2, \ldots, k$.
	\item Select $d$ message symbols without repetition, with indices $i \sim \rho (i),~i = 1, 2, \ldots, k$.
	\item Perform XOR on the $d$ message symbols to generate a coded symbol.
\end{enumerate}

The {belief propagation (BP)} decoding of LT codes is done by iterative message passing between the input nodes and the output nodes. These messages incorporate the likelihood of the received symbols given the channel output and the prior beliefs from other nodes. Denote the message sent from input node $i$ to output node $o$ during the $l$-th round as $m_{i,o}^{\left( l \right)}$, the message from input node $i$ to output node $o$ during the $l$-th round as $m_{o, i}^{\left( l \right)}$, the incoming LLR from channel as ${{\hat v}}_o$, then the update rules for the $(l+1)$-th round are as follows:
\begin{align}
	\tanh \left( {{{m_{o,i}^{\left( l \right)}}}/{2}} \right) &= \tanh \left( {{{{{\hat v}_o}}}/{2}} \right) \cdot \prod\limits_{i' \ne i} {\tanh \left( {{{m_{i',o}^{\left( l \right)}}}/{2}} \right)}, \label{eq4} \\
	m_{i,o}^{\left( {l + 1} \right)} &= \sum\limits_{o' \ne o} {m_{o',i}^{\left( l \right)}}, \label{eq5}
\end{align}
where the product is over all input nodes adjacent to $o$ other than $i$, and the sum is over all output bits adjacent to $i$ other than $o$. The marginal log-likelihood of input nodes are calculated by $\mathcal{M}_i^{\left( l \right)} = \sum\limits_o {m_{o,i}^{\left( l \right)}}$.

The iteration process continues until convergence or a maximum number of iterations is reached, after which the decoder detaches itself from the channel and stops receiving more coded symbols.

It has been shown that, for a well-designed LT code under suitable channel noise levels, increasing the number of rateless symbols or the number of decoder iterations could improve error performance and enhance the quality of subsequent source decoding, thereby providing an explainable tradeoff between fidelity, rate and complexity \cite{etesami_raptor_2006}. 
In fact, employing rateless coding for JSCC has been an appealing idea. Clair et al. proposed a variable-length encoding scheme for binary source using rateless codes \cite{caire_universal_2004}. Bursalioglu et al. proposed a lossy JSCC scheme using raptor codes \cite{bursalioglu_lossy_2008}, and applied this scheme to deep space image communications \cite{bursalioglu_joint_2011}. Xu et al. explored distributed video coding using Raptor codes, as well as the design of an iterative soft decoder utilizing side information \cite{qian_xu_distributed_2007} \cite{qian_xu_wynerziv_2007}. 
However, those methods focus on accurate bit transmission and do not address nonlinear inference or critical feature extraction for edge users.

\subsection{{Preliminaries 2:} Nonlinear Transform Coding}

Nonlinear Transform Coding (NTC) was introduced to address the limitations of both conventional image codecs and early learning-based compression methods. Traditional schemes like JPEG rely on fixed transforms and manually designed side information, while early neural codecs often assume fully factorized priors that neglect spatial dependencies, leading to inefficient entropy coding. NTC bridges these gaps by jointly learning nonlinear transforms and a scale hyperprior that models correlations in latent representations. This enables more accurate entropy models and efficient quantization of semantic features, resulting in improved rate–distortion performance and adaptability across diverse image statistics.

The operation process of NTC is summarized as follows.
At the encoder, the source image ${\bf{x}}$ is first mapped into a latent representation ${\bf{y}}$ by a nonlinear analysis transform $g_a({\bf{x}};\phi_g)$. This transform extracts structural and semantic features from the image. The latent vector ${\bf{y}}$ is then quantized to a discrete sequence $\bar{\bf{y}}$.
To improve entropy modeling, a hyperprior analysis transform $h_a({\bf{y}};\phi_h)$ produces auxiliary latent variables ${\bf{z}}$, which are also quantized to $\bar{\bf{z}}$. These hyperlatents summarize the spatial distribution of scales in ${\bf{y}}$. A synthesis transform $h_s(\bar{\bf{z}};\theta_h)$ then estimates the scale parameters ${\bm{\sigma}}$, which parameterize the conditional distribution of each latent element $y_i$.

To enable gradient-based optimization despite the non-differentiable quantization step, NTC replaces hard quantization with additive uniform noise during training, defined as
\begin{equation}\label{eq7}
	{\bf{\tilde z}} = {\bf{z}} + o,~o \sim \mathcal{U}\left( { - 1/2, 1/2} \right).
\end{equation}
This relaxation yields a continuous latent representation that approximates the behavior of quantized variables, allowing the model to be trained end-to-end using standard backpropagation. 
Specifically, each latent variable $\tilde{y}_i$ is modeled as a Gaussian distribution with mean $\tilde{\mu}_i$ and variance $\tilde{\sigma}_i^2$, convolved with a uniform distribution to account for quantization. The parameters $(\tilde{\mu}, \tilde{\sigma})$ are predicted by a parametric synthesis transform $h_s$ applied to the auxiliary latent representation $\tilde{\mathbf{z}}$. This yields the conditional prior
\begin{equation}
	p(\tilde{\mathbf{y}}|\tilde{\mathbf{z}}) = \prod_i \left( \mathcal{N}(\tilde{y}_i|\tilde{\mu}_i, \tilde{\sigma}_i^2) * \mathcal{U}\left(-1/2, 1/2 \right) \right),
\end{equation}
The hyperprior $\tilde{\mathbf{z}}$ can be modeled using a non-parametric, fully factorized density model, given by
\begin{equation}
	p(\tilde{\mathbf{z}}|\bm{\psi}) = \prod_j \left( p_{z_j|\psi_j}(z_j|\psi_j) * \mathcal{U}\left(-1/2, 1/2\right) \right),
\end{equation}
where $\bm{\psi}$ are learnable parameters. Both $\tilde{\bf{y}}$ and $\tilde{\bf{z}}$ are transmitted along the communication channel, where the former comprises of the main bitstream, and the latter serves as the side information.

At the decoder, the side information $\tilde{\bf{z}}$ is first recovered and passed through $h_s$ to reconstruct the scale parameters ${\bm{\sigma}}$. These parameters define the probability mass functions used to recover the quantized latent sequence $\tilde{\bf{y}}$ via arithmetic decoding. Finally, the synthesis transform $g_s(\tilde{\bf{y}};\theta_g)$ reconstructs the image $\hat{\bf{x}}$.

The optimization of NTC can be interpreted as approximating the true posterior $p_{\tilde{\mathbf{y}}, \tilde{\mathbf{z}}|\mathbf{x}}$ with a parametric variational density $p_{\tilde{\mathbf{y}}, \tilde{\mathbf{z}}|\mathbf{x}}$, since the exact posterior is generally intractable. This approximation is achieved by minimizing the Kullback-Leibler (KL) divergence between the two distributions, averaged over the source distribution $p_{\mathbf{x}}$, expressed by
\begin{align}
	&\min_{\phi_g, \phi_h, \theta_g, \theta_h} \; \mathbb{E}_{p_(\mathbf{x})} \; D_{\mathrm{KL}} \left[ q(\tilde{\mathbf{y}}, \tilde{\mathbf{z}} | \mathbf{x}) \parallel p(\tilde{\mathbf{y}}, \tilde{\mathbf{z}} | \mathbf{x}) \right] \notag \\
	&= \min_{\phi_g, \phi_h, \theta_g, \theta_h} \; \mathbb{E}_{p_(\mathbf{x})} \Big[\log q (\tilde{\mathbf{y}}, \tilde{\mathbf{z}} | \mathbf{x}) \notag \\
	& - \log p (\tilde{\mathbf{y}} | \tilde{\mathbf{z}})  - \log p (\tilde{\mathbf{z}}) - \log p (\mathbf{x} | \tilde{\mathbf{y}})
	\Big] + \text{const}_1.
\end{align}
Here, $-\log p(\tilde{\mathbf{y}}|\tilde{\mathbf{z}})$ represents the transmission rate of the latent features based on the entropy model, $-\log p(\tilde{\mathbf{z}})$ is the transmission rate of side information, and $-\log p(\mathbf{x}|\tilde{\mathbf{y}})$ is the weighted distortion term, while $\log q(\tilde{\mathbf{y}},\tilde{\mathbf{z}}|\mathbf{x})$ could be technically dropped due to the uniformly-noised approximation. 

From the above formulation, we could see that NTC is suitable for the joint design with channel coding schemes, since it integrates quantization with a learned entropy model which could adaptively guide the length of the encoded bit stream to be transmitted through the channel.
Many prior works on JSCC have already employed NTC as the design framework, exploring applications in image \cite{8-b15}, video \cite{9953110} and more. 
{However, these methods, lacking the rateless property, still fail to address Challenge 1 and 2 under broadcast settings, and do not support complexity control as highlighted in Challenge 3.}
}
{\section{Schematic Design of NTRSCC}

\begin{figure*}[ht]
	\setlength{\belowcaptionskip}{-0.5cm}
	\vspace{-35pt}
	\centering
	\includegraphics[width=17.cm]{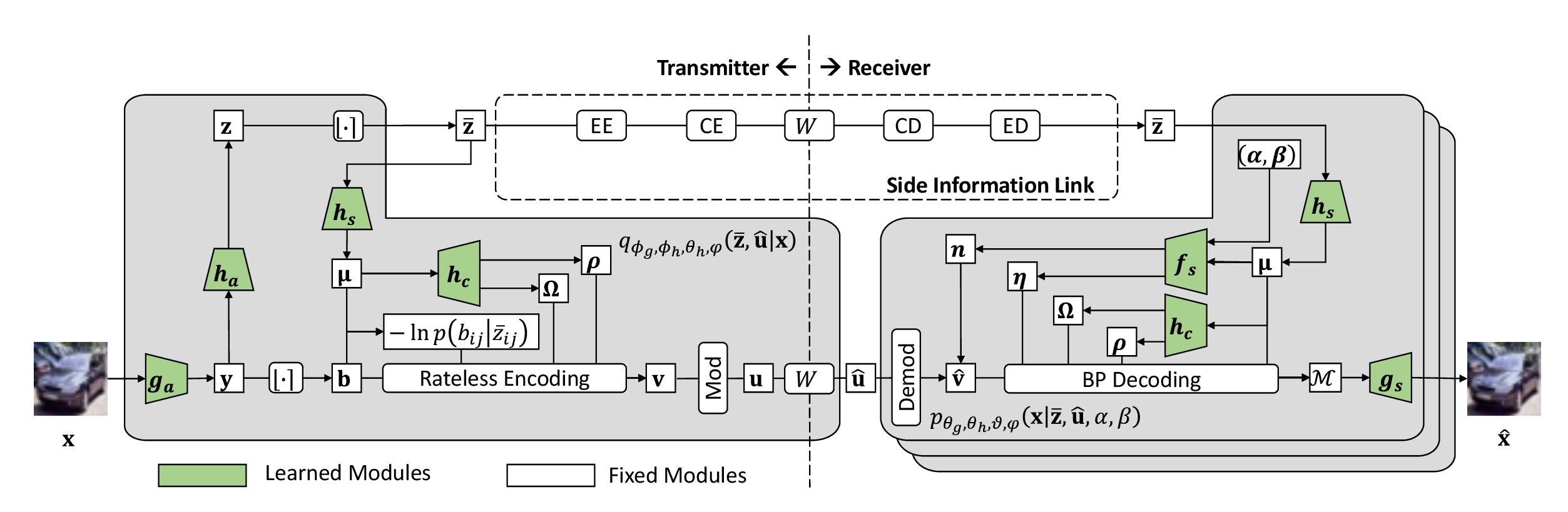}
	\caption{Schematic diagram for nonlinear transform  rateless source-channel coding. 
	At the transmitter, the analysis transform $g_a$ extracts semantic information ${\bf{y}}$ from source image, the analysis transform $h_a$ produces auxiliary variable $\bf{z}$, the synthesis transforms $h_s$ and $h_c$ predict bit likelihoods ${\bm{\mu}}$ and LT parameters ${\bm{\rho}}$, ${\bm{\Omega}}$ to generate rateless symbols $\bf{v}$.
	At the receiver, the synthesis transforms $h_s$ and $h_c$ predict bit likelihoods and LT parameters, the scaling function $f_s$ estimates the number of received symbols $\bf{n}$ and decoding iterations $\bm{\eta}$ to perform BP decoding on demodulator soft values ${\bf{\hat v}}$, the synthesis transform ${g_s}$ takes soft decoder outputs $\mathcal{M}$ and recovers image information.}
	\label{fig2}
\end{figure*}

{In this section, we focus on the overall structure of our proposed nonlinear transform rateless source-channel coding (NTRSCC). First, we provide a conceptual illustration of the working process for the NTRSCC. Then, we present a variational analysis of our proposed NTRSCC and identify its connections to our delay-constrained performance optimization problem. For notational simplicity, we focus on the transmission between the transmitter and a single user, and hence we will temporarily omit the user index in this part. Note that the analysis in this section could also be applied to the broadcasting case. For simplicity, we will focus on the case when binary phase-shift keying (BPSK) is used as the modulation scheme, which is a widely adopted assumption for many previous works on coding over BIAWGN channels \cite{xu_optimization_2016} \cite{etesami_raptor_2006} \cite{sharon_analysis_2006}.}

\subsection{System {Overview}}

The schematic diagram of NTRSCC is shown in Fig. \ref{fig2}. We elaborate on the system by the following components.
\subsubsection{Transmitter}
At the transmitter, the source image \({\bf{x}} \in \mathbb{R}^{H \times W \times C}\) is first mapped into a latent representation \({\bf{y}} \in [0, 1]^{k \times c}\) by a nonlinear analysis transform \(g_a({\bf{x}};\phi_g)\) that extracts semantic information from source image. ${\bf{y}}$ is then mapped to a bit sequence \({\bf{b}} \in \{0, 1\}^{k \times c}\) using straight-through estimation (STE) for quantization.
A hyperprior analysis transform \(h_a({\bf{y}};\phi_h)\) produces auxiliary latent variables \({\bf{z}} \in \mathbb{R}^{M}\).
Similar to the NTC formulation, we use a uniformly-noised representation ${{\bf{\tilde z}}}$ a continuous approximation {to quantization} during training, defined by Eq. (\ref{eq7}).
From ${{\bf{\tilde z}}}$, a synthesis transform \(h_s({{\bf{\tilde z}}};\theta_h)\) estimates the log-likelihood $\bm{\mu}$ of quantized bits, defined as
\begin{equation}
{\mu _{ij}} = \ln \frac{{p\left( {\left. {{b_{ij}} = 0} \right|{\bf{x}};{\theta _h}} \right)}}{{p\left( {\left. {{b_{ij}} = 1} \right|{\bf{x}};{\theta _h}} \right)}},\;i = 1,2, \ldots ,k,\;j = 1,2, \ldots ,c.
\end{equation}
{Here, we assume that the transformed bitstream $\bf{b}$ are independent. Using $\bm{\mu}$, we are able to devise strategies of rateless encoding, and provide side information to help rateless decoding.}
Additionally, {we design} a rateless coding parameter transform ${h_c}({\bm{\mu }};\varphi )$ {to predict} the encoding degree distributions ${\bf{\Omega }} \in {\left[ {0,1} \right]^{k \times c}}$ and the selection probability of each message bit ${\bm{\rho }} \in {\left[ {0,1} \right]^{k \times c}}$ for all $c$ encoded bit sequences.
{Moreover, we introduce a generation matrix} ${\bf{G}}$ {which is} sampled using ${\bm{\Omega }}$ and ${\bm{\rho }}$ in the way described in Section {II. B. We then obtain the quantized latent features}
\begin{equation} \label{eq13}
	{\bf{v}} = {\bf{Gb}},
\end{equation}
{which are} $c$ streams of rateless codewords with variable lengths:
\begin{equation} \label{eq14}
	{\bf{ v}} = \left[ {{{{\bf{ v}}}_1},{{{\bf{ v}}}_2} \ldots ,{{{\bf{ v}}}_c}} \right],{{{\bf{ v}}}_i} \in {\left\{ {0,1} \right\}^{{n_i}}},~i = 1,2, \ldots ,k,
\end{equation}
where $n_i$ is the number of rateless symbols produced for the $i-$th stream.
Furthermore, the side information ${{\bf{\bar z}}}$, {which is obtained by quantizing the auxiliary latent variables \({\bf{z}} \in \mathbb{R}^{M}\) to a string of integers ${{\bf{\bar z}}} \in \mathbb{Z}^{M}$,} is processed with entropy encoding (EE) and then encoded with a channel encoding (CE) at the transmitter during evaluation.

\subsubsection{Transmissions}
During transmissions, the PBSK modulator maps $\bf{v}$ into channel inputs $\bf{u}$, which are corrupted by additive white Gaussian noise (AWGN) to yield received symbols ${{\bf{\hat u}}}$. Under the AWGN channel, this process is expressed by
\begin{equation} \label{eq15}
	{\hat u_{ij}} = {u_{ij}} + w,~w \sim \mathcal{N}\left( {0,{\sigma ^2}} \right),~i = 1,2, \ldots ,k,~j = 1,2, \ldots ,c.
\end{equation}
The encoded side information is also modulated and transmitted over a wireless channel with the same noise level.

\subsubsection{Receiver}
The receiver collects noisy channel outputs ${\bf{\hat u}}$ from the $c$ different streams, which are demodulated to produce soft bit estimates \({\bf{\hat v}}\), which represent the log-likelihoods of the transmitted rateless codewords. Here, ${\bf{\hat v}}$ is defined as
\begin{equation} \label{eq16}
	{\bf{\hat v}} = \left[ {{{{\bf{\hat v}}}_1},{{{\bf{\hat v}}}_2} \ldots ,{{{\bf{\hat v}}}_c}} \right],{{{\bf{\hat v}}}_i} \in {\left\{ {0,1} \right\}^{{n_i}}},~i = 1,2, \ldots ,k,
\end{equation}
where $n_i$ is the number of rateless symbols acquired from the $i-$th stream.
The side information \({\bf{\bar z}}\) is recovered using entropy decoding (ED) and channel decoding (CD). A synthesis transform \(h_s({\bf{z}};\theta_h)\) then reconstructs the log-likelihood parameters \(\bm{\mu}\), which is utilized to guide belief propagation (BP) decoding in the following ways.
First, ${h_c}({\bm{\mu }};\varphi )$ produces the encoding degree distributions ${\bm{\Omega }}$ and the selection probability ${\bm{\rho }}$ at the decoder, which helps to construct the same bipartite graph structure with the encoder, provided with the same set of random seeds.
Second, \(\bm{\mu}\) is combined with the tradeoff parameters $\alpha$ and $\beta$ in a scaling function:
\begin{equation}\label{eq14}
    (\gamma, \eta) = {f_s}\left( {{\bm{\mu }},\alpha ,\beta ;\vartheta } \right),
\end{equation}
where $\gamma $ and $\eta$ are the scaling parameters of transmission bit length and iterations, respectively.
Finally, $\bm{\mu}$ is directly utilized in the BP decoding as prior beliefs on the likelihoods of the quantized features, to recover the soft log-likelihood values from the BP decoder output \({\bf{\hat y}} \in [0,1]^{k \times c}\). Specifically, the message update rules when \(\bm{\mu}\) is present are modified as
\begin{align}
	\tanh \left( {{{m_{o,i}^{\left( l \right)}}}/{2}} \right) &= \tanh \left( {{{{{\hat v}_o}}}/{2}} \right) \cdot \prod\limits_{i' \ne i} {\tanh \left( {{{m_{i',o}^{\left( l \right)}}}/{2}} \right)}, \label{eq10} \\
	m_{i,o}^{\left( {l + 1} \right)} &= \mu_{i} + \sum\limits_{o' \ne o} {m_{o',i}^{\left( l \right)}}, \label{eq11}
\end{align}
and the marginal log-likelihood for input nodes are calculated by $\mathcal{M}^{\left( l \right)} = \mu_{i} + \sum\limits_o {m_{o,i}^{\left( l \right)}}$,
where the number of iterations satisfy $l = 1,2, \ldots ,\left\lceil \eta  \right\rceil$. 
After obtaining the marginal likelihood, we do not perform hard decision to predict the latent bit sequence ${{\bf{\hat b}}}$, but instead calculate the probability
\begin{equation}
{p_1}\left( {{\bf{\hat b}}} \right) \buildrel \Delta \over = p\left( {\left. {{\bf{\hat b}} = 1} \right|{\bf{\bar z}},{\bf{\hat u}},{\bf{x}},\alpha ,\beta } \right) = 1 - {\rm{sigmoid}}\left( \mathcal{M} \right)
\end{equation}
This formulation allows the synthesis transform to operate on meaningful soft information by bounding the corrupted decoder outputs in a closed region, and enforces ${p_1}\left( {{\bf{\hat b}}} \right) \to {\bf{b}}$ when the sum-product algorithm approaches the true marginal. Moreover, it eliminates the need to infer the joint distribution $q({\bf{\bar z}},{\bf{\hat u}},{\bf{b}}\mid {\bf{x}})$ during end-to-end training, simplifying design.
Finally, we utilize a synthesis transform ${g_s}\left( {{p_1}\left( {{\bf{\hat b}}} \right);{\theta _s}} \right)$ to reconstruct the semantic output \({\bf{\hat s}} \in \mathbb{R}^{L}\).

\begin{Remark}
We assume that the entropy code and the channel code employed in the side information transmission link are efficient and capacity-achieving. Following the works of Slepian and Wolf \cite{slepian_noiseless_1973}, we intuitively define $n_j$ as proportional to the sum of log probabilities in the $j$-th feature channel, which is
\begin{equation} \label{eq21}
{n_j} =  - \frac{\gamma }{{{\rm{Cap}}\left( \sigma  \right)}} \cdot \sum\limits_{i = 1}^k {{{\log }_2}p\left( {\left. {{b_{ij}}} \right|{\bf{\bar z}}} \right)}.
\end{equation}
Here, $\gamma$ is a scaling parameter that enables lossy reconstruction based on the tradeoff parameters $\alpha$, $\beta$ of the current user, as well as the side information ${{\bf{\bar z}}}$. ${{\rm{Cap}}\left( \sigma  \right)}$ denotes the channel capacity measured in bits per channel symbol.
\end{Remark}

{\subsection{Variational Problem Formulation}}
\begin{figure*}[ht]
	\vspace{-35pt}
	\begin{align} \label{eq20}
		{\mathbb{E}_{p\left( {\bf{x}} \right)}}&{D_{KL}}\left[ {\left. {q({\bf{\tilde z}},{\bf{\hat u}}\mid {\bf{x}})} \right\|p({\bf{\tilde z}},{\bf{\hat u}}\mid {\bf{x}})} \right] \notag\\
		&= {\mathbb{E}_{p\left( {\bf{x}} \right)}}{\mathbb{E}_{q({\bf{\tilde z}},{\bf{\hat u}}\mid {\bf{x}})}}\left[ {\ln q({\bf{\hat u}}\mid {\bf{x}}) + \ln q({\bf{\tilde z}}\mid {\bf{x}}) - \ln p({\bf{\tilde z}}) - \ln p({\bf{\hat u}}\mid {\bf{\tilde z}}) - \ln p(\left. {\bf{x}} \right|{\bf{\tilde z}},{\bf{\hat u}}) + \ln p\left( {\bf{x}} \right)} \right] \notag\\
		&\mathop  = \limits^{\left( a \right)} {\mathbb{E}_{p\left( {\bf{x}} \right)}}{\mathbb{E}_{q({\bf{\tilde z}}\mid {\bf{x}})}}{\mathbb{E}_{q({\bf{\hat u}}\mid {\bf{x}})}}\left[ {\ln \frac{{q({\bf{\hat u}}\mid {\bf{x}})}}{{p({\bf{\hat u}}\mid {\bf{\tilde z}})}} + \ln q({\bf{\tilde z}}\mid {\bf{x}}) - \ln p({\bf{\tilde z}}) - \ln {\mathbb{E}_{p\left( {\left. {\mathcal{M}} \right|{\bf{\tilde z}},{\bf{\hat u}}} \right)}}p(\left. {\bf{x}} \right|{\mathcal{M}})} \right] + {\mathbb{E}_{p\left( {\bf{x}} \right)}}\ln p\left( {\bf{x}} \right) \notag\\
		&= {\mathbb{E}_{p\left( {\bf{x}} \right)}}\left\{ {{\mathbb{E}_{q({\bf{\tilde z}}\mid {\bf{x}})}}{D_{KL}}\left[ {\left. {q({\bf{\hat u}}\mid {\bf{x}})} \right\|p({\bf{\hat u}}\mid {\bf{\tilde z}})} \right] + {\mathbb{E}_{q({\bf{\tilde z}}\mid {\bf{x}})}}{\mathbb{E}_{q({\bf{\hat u}}\mid {\bf{x}})}}\left[ {\ln q({\bf{\tilde z}}\mid {\bf{x}}) - \ln p({\bf{\tilde z}}) - \ln {\mathbb{E}_{p\left( {\left. {\mathcal{M}} \right|{\bf{\tilde z}},{\bf{\hat u}}} \right)}}p(\left. {\bf{x}} \right|{\mathcal{M}})} \right]} \right\} + {\mathbb{E}_{p\left( {\bf{x}} \right)}}\ln p\left( {\bf{x}} \right) \notag\\
		&\mathop  = \limits^{\left( b \right)} {\mathbb{E}_{p\left( {\bf{x}} \right)}}\left\{ {{\mathbb{E}_{q({\bf{\tilde z}}\mid {\bf{x}})}}n{D_{KL}}\left[ {\left. {q(\hat u\mid {\bf{x}})} \right\|p(\hat u\mid {\bf{\tilde z}})} \right] + {\mathbb{E}_{q({\bf{\tilde z}}\mid {\bf{x}})}}{\mathbb{E}_{q({\bf{\hat u}}\mid {\bf{x}})}}\left[ {\ln q({\bf{\tilde z}}\mid {\bf{x}}) - \ln p({\bf{\tilde z}}) - \ln {\mathbb{E}_{p\left( {\left. {\mathcal{M}} \right|{\bf{\tilde z}},{\bf{\hat u}}} \right)}}p(\left. {\bf{x}} \right|{\mathcal{M}})} \right]} \right\} + {\mathbb{E}_{p\left( {\bf{x}} \right)}}\ln p\left( {\bf{x}} \right) \notag\\
		&\mathop  = \limits^{\left( c \right)} {\mathbb{E}_{p\left( {\bf{x}} \right)}}\left\{ {n{D_{KL}}\left[ {\left. {q(\hat u\mid {\bf{x}})} \right\|p(\hat u\mid {\bf{\tilde z}})} \right] - \ln p({\bf{\tilde z}}) - {\mathbb{E}_{q({\bf{\hat u}}\mid {\bf{x}})}}\ln {\mathbb{E}_{p\left( {\left. {\mathcal{M}} \right|{\bf{\tilde z}},{\bf{\hat u}}} \right)}}p(\left. {\bf{x}} \right|{\mathcal{M}})} \right\} + {\mathbb{E}_{p\left( {\bf{x}} \right)}}\ln p\left( {\bf{x}} \right) \notag\\
		&\mathop  \le \limits^{\left( d \right)} {\mathbb{E}_{p\left( {\bf{x}} \right)}}\left\{ {n{D_{KL}}\left[ {\left. {q(\hat u\mid {\bf{x}})} \right\|p(\hat u\mid {\bf{\tilde z}})} \right] - \ln p({\bf{\tilde z}}) - {\mathbb{E}_{q({\bf{\hat u}}\mid {\bf{x}})}}{\mathbb{E}_{p\left( {\left. {\mathcal{M}} \right|{\bf{\tilde z}},{\bf{\hat u}}} \right)}}\ln p(\left. {\bf{x}} \right|{\mathcal{M}})} \right\} + {\mathbb{E}_{p\left( {\bf{x}} \right)}}\ln p\left( {\bf{x}} \right) \notag\\
		&\mathop  \le \limits^{\left( e \right)} {\mathbb{E}_{p\left( {\bf{x}} \right)}}\left\{ {n{D_{KL}}\left[ {\left. {q(v\mid {\bf{x}})} \right\|p(v\mid {\bf{\tilde z}})} \right] - \ln p({\bf{\tilde z}}) - {\mathbb{E}_{q({\bf{\hat u}}\mid {\bf{x}})}}{\mathbb{E}_{p\left( {\left. {\mathcal{M}} \right|{\bf{\tilde z}},{\bf{\hat u}}} \right)}}\ln p(\left. {\bf{x}} \right|{\mathcal{M}})} \right\} + {\mathbb{E}_{p\left( {\bf{x}} \right)}}\ln p\left( {\bf{x}} \right)
		\tag{21}
	\end{align}
	\hrulefill
	\vspace{-10pt}
\end{figure*}
The optimization problem of NTRSCC can be formulated as a variational autoencoder (VAE) model: the probabilistic generative model is composed of demodulator, belief propagation decoding, coding parameter transform ${h_c}({\bm{\mu }};\varphi )$, and the synthesis transforms ${h_s}({\bf{z}};{\theta _h})$, ${g_s}({\bf{\hat y}};{\theta _g})$; the approximate inference model includes analysis transforms ${h_a}({\bf{y}};{\phi _h})$, ${g_a}({\bf{x}};{\phi _g})$, coding parameter transform ${h_c}({\bm{\mu }};\varphi )$, rateless encoder, modulator and the wireless channel. 
Similar to prior formulations, the inference model aims to construct a parametric variational density ${q_{{\phi _g},{\phi _h},{\theta _h},\varphi }}({\bf{\bar z}},{\bf{\hat u}}\mid {\bf{x}})$ that approximates the true posterior ${p_{{\phi _g},{\phi _h},{\theta _h},\varphi }}({\bf{\bar z}},{\bf{\hat u}}\mid {\bf{x}})$, which is intractable in practice. The training objective minimizes the KL divergence between these distributions over the source data \( p({\bf{x}}) \), which is given by
\begin{equation}
	\mathop {\min }\limits_{{\phi _g},{\phi _h},{\theta _h},{\theta _g},\varphi ,\vartheta } {\mathbb{E}_{p\left( {\bf{x}} \right)}}{D_{KL}}\left[ {\left. {q({\bf{\tilde z}},{\bf{\hat u}}\mid {\bf{x}})} \right\|p({\bf{\tilde z}},{\bf{\hat u}}\mid {\bf{x}})} \right]. \tag{20}
\end{equation}
Transforming this objective, we have Eq. (\ref{eq20}). Here, we define the total number of transmitted bits as 
$n = \sum\limits_{j = 1}^c {{n_j}}$.
Equality (a) comes from the fact that quantization noise $o$ and channel noise $w$ are independent, leading to 
\setcounter{equation}{21}
\begin{equation}
	q({\bf{\tilde z}},{\bf{\hat u}}\mid {\bf{x}}) = q({\bf{\tilde z}}\mid {\bf{x}}) \cdot q({\bf{\hat u}}\mid {\bf{x}}).
\end{equation}
Equality (b) comes from the i.i.d nature of both the Gaussian noise and the generation of rateless codewords.
Equality (c) comes from the definition of the uniformly-noised proxy variable ${{\bf{\tilde z}}}$, which is Eq. (\ref{eq7}).
Inequality (d) comes from Jensen inequality, i.e.
\begin{equation}
	\ln {\mathbb{E}_{p\left( {\left. {\mathcal{M}} \right|{\bf{\tilde z}},{\bf{\hat u}}} \right)}}p(\left. {\bf{x}} \right|{\mathcal{M}}) \ge {\mathbb{E}_{p\left( {\left. {\mathcal{M}} \right|{\bf{\tilde z}},{\bf{\hat u}}} \right)}}\ln p(\left. {\bf{x}} \right|{\mathcal{M}}).
\end{equation}
Inequality (e) comes from the data processing inequality.

Now we take a closer look on the variational objective derived in Eq. (\ref{eq20}).
The first term could be transformed as follows, given that ${q(v\mid {\bf{x}})}$ is exact:
\begin{align} \label{eq28}
	&n{\mathbb{E}_{p\left( {\bf{x}} \right)}}{D_{KL}}\left[ {\left. {q(v\mid {\bf{x}})} \right\|p(v\mid {\bf{\tilde z}})} \right] \notag\\
	&= n{\mathbb{E}_{p\left( {\bf{x}} \right)}}{\mathbb{E}_{q(v\mid {\bf{x}})}}{\mathbb{E}_{q({\bf{\tilde z}}\mid {\bf{x}})}}\left[ {\ln \frac{{q(v,{\bf{\tilde z}}\mid {\bf{x}})}}{{p(v\mid {\bf{\tilde z}})q({\bf{\tilde z}}\mid {\bf{x}})}}} \right] \notag\\
	&= n{\mathbb{E}_{p\left( {\bf{x}} \right)}}{\mathbb{E}_{q(v\mid {\bf{x}})}}{\mathbb{E}_{q({\bf{\tilde z}}\mid {\bf{x}})}}\left[ {\ln \frac{{q(v,{\bf{\tilde z}}\mid {\bf{x}})p\left( {\bf{x}} \right)}}{{p\left( {{\bf{\tilde z}}} \right)p(v\mid {\bf{\tilde z}})q({\bf{x}}\mid {\bf{\tilde z}})}}} \right] \notag\\
	&= n{\mathbb{E}_{p\left( {v,{\bf{\tilde z}},{\bf{x}}} \right)}}\left[ {\ln \frac{{q(\left. {v,{\bf{x}}} \right|{\bf{\tilde z}})}}{{p(v\mid {\bf{\tilde z}})q({\bf{x}}\mid {\bf{\tilde z}})}}} \right] \notag\\
	& = nI\left( {\left. {v;{\bf{x}}} \right|{\bf{\tilde z}}} \right) \ge nI\left( {\left. {\hat u;{\bf{x}}} \right|{\bf{\tilde z}}} \right),
\end{align}
which upper-bounds the amount of information that the noisy channel symbols ${{\bf{\hat u}}}$ could encapsulate about the source ${\bf{x}}$, when given side information ${{\bf{\tilde z}}}$. Moreover, since {${D_{KL}}\left[ {\left. {q(v\mid {\bf{x}})} \right\|p(v\mid {\bf{\tilde z}})} \right] \ge 0$}, minimizing the first term also reduces the number of rateless symbols $n$.
The second term corresponds to the rate of side information, encouraging compact representation of the prior probabilities and rateless coding parameters, similar to the NTC formulation.
The third term corresponds to the weighted distortion of the image being reconstructed from the BP decoder soft output. Specifically, for reconstruction tasks, this term is equivalent to
\begin{equation}
{{\mathbb{E}_{q({\bf{\hat u}}\mid {\bf{x}})}}{\mathbb{E}_{p\left( {\left. {\cal M} \right|{\bf{\tilde z}},{\bf{\hat u}}} \right)}}\ln p(\left. {\bf{x}} \right|{\cal M})}={\text{MSE}}\left( {{\bf{x}},{\bf{\hat x}}} \right) = \frac{\|{\bf{x}} - {\bf{\hat x}}\|^2}{{HWC}}.
\end{equation}
Here, $H$ denotes the image height, $W$ denotes the image width and $C$ is the number of channels.
Therefore, the objective can be seen as finding a tradeoff between the number of rateless symbols, the rate of side information, and weighted distortion of the reconstructed image, indicating a strong connection to our problem defined in Eq. (\ref{eq3}).

In order to optimize the variational objective, we must characterize ${{D_{KL}}\left[ {\left. {p(v\mid {\bf{x}})} \right\|p(v\mid {\bf{\tilde z}})} \right]}$ by investigating the encoding procedure of LT codes with non-uniform message symbol selection.

\section{{Optimization of NTRSCC}}

{In this section, to optimize the objective derived in the variational framework, we derive the theoretical principles governing the design of the rateless coding parameters. Specifically, we will design the encoding degree distribution $\Omega$ and the per-symbol selection probability $\rho$ by analyzing the requirements for successful BP initialization and maximizing per-symbol mutual information.} Since the rateless encoding is done separately for each feature channel with $k$ quantized bits, we will drop the indices for channels, and only consider the input node $i$ and the output node $o$ for the remainder of this section.

In classical information-theoretical analysis of rateless codes, the message bits are assumed to be all-zero for simplification, whose results could be extended to non-zero messages due to the symmetry of the BIAWGN channel. Therefore, we follow this convention by defining a dimension-wise flipping operation to the prior information $\bm{\mu}$, defined as
\begin{equation}
{{\tilde \mu }_{i}} = \left\{ {\begin{array}{*{20}{c}}
		{{\mu _{i}},~{b_{i}} = 0,}\\
		{ - {\mu _{i}},~{b_{i}} = 1,}
\end{array}} \right.
\end{equation}
where $i = 1,2, \ldots ,k$.
Furthermore, we assume that the prior estimation correctly reflects the true bit statistics, i.e. 
\begin{equation}
{\mu _{i}} \equiv \ln \frac{{p\left( {\left. {{b_{i}} = 0} \right|{\bf{x}}} \right)}}{{p\left( {\left. {{b_{i}} = 1} \right|{\bf{x}}} \right)}},
\end{equation}
which translates to
\begin{equation}
	p\left( {\left. {{{\tilde \mu }_{i}} < 0} \right|{\mu _{i}}} \right) = 1 - \sigma \left( {\left| {{\mu _{i}}} \right|} \right).
\end{equation}
This quantity reflects the probability that prior ${{\tilde \mu }_{i}}$ is on the 'wrong' side regarding to bit $b_{i}$.
Therefore, the distribution of the flipped priors ${{\bm{\tilde \mu }}}$ could be conditioned on ${\left| {{\mu _{i}}} \right|}$ alone, given by
\begin{equation}
{{\tilde \mu }_{ij}} = \left\{ {\begin{array}{*{20}{c}}
		{ - \left| {{\mu _{ij}}} \right|,p = 1 - {\rm{sigmoid}}\left( {\left| {{\mu _{ij}}} \right|} \right),}\\
		{\left| {{\mu _{ij}}} \right|,p = {\rm{sigmoid}}\left( {\left| {{\mu _{ij}}} \right|} \right),}
\end{array}} \right.
\end{equation}
where ${\rm{sigmoid}}\left( x \right) = 1/\left[ {1 + \exp \left( { - x} \right)} \right]$.

{To simplify notation, define
\begin{align}
	{V_o} & \buildrel \Delta \over = {\mathbb{E}_{p\left( {{{\hat v}_o}} \right)}}\left[ {\tanh \left( {{{{{\hat v}_o}}}/{2}} \right)} \right] \notag\\
	&=\frac{1}{{\sqrt {2\pi {\sigma ^2}} }}\int_{ - \infty }^\infty  {\tanh } \left( {{{\hat v}_o}/2} \right)\exp \left[ { - \frac{{{{(u - {\sigma ^2}/2)}^2}}}{{2{\sigma ^2}}}} \right]d{{\hat v}_o},
\end{align} 
where ${{\hat v}_o}$ is the demodulator output LLR under the all-zero assumption.
Define
\begin{align} \label{eq45}
	{U_i} &\buildrel \Delta \over = {\mathbb{E}_{p\left( {\left. {{{\tilde \mu }_i}} \right|{\mu _i}} \right)}}\left[ {\tanh \left( {{{\tilde \mu }_i}/2} \right)} \right]  \notag\\
	&= \left[ {2 \cdot {\rm{sigmoid}}\left( {\left| {{\mu _i}} \right|} \right) - 1} \right]\tanh \left( {\left| {{\mu _i}} \right|/2} \right).
\end{align}}
Define ${S}_d$ as the subset of input nodes connected to an output node of degree $d$. When the selection probability of each input node is $\rho_i$, the probability that a specific subset of message symbols are summed together by the encoding symbol is
\begin{equation}
p\left( {{{S}_d}} \right) = {{\prod\limits_{j \in {{ S}_d}} {{\rho _j}} } \mathord{\left/
		{\vphantom {{\prod\limits_{j \in {{S}_d}} {{\rho _j}} } {\sum\limits_{\left| {{{ S}_d}} \right| = d} {\prod\limits_{j \in {{S}_d}} {{\rho _j}} } }}} \right.
		\kern-\nulldelimiterspace} {\sum\limits_{\left| {{{ S}_d}} \right| = d} {\prod\limits_{j \in {{ S}_d}} {{\rho _j}} } }}.
\end{equation}
The output edge degree distribution $\omega(d)$ and input node edge degree distribution $\iota(d)$ are defined identically to those in \cite{etesami_raptor_2006}.

\subsection{Conditions for Successful Decoding Initiation}

To achieve flexible tradeoff between complexity and distortion, the LT codes should be designed in a way that each iteration at the BP decoder is helpful toward successful decoding \cite{xu_optimization_2016}, i.e. 
\begin{equation} \label{eq34}
	\mathbb{E}\left[ {m_{i,o}^{\left( {l + 1} \right)}} \right] > \mathbb{E}\left[ {m_{i,o}^{\left( l \right)}} \right].
\end{equation}
This constraint has been leveraged by many existing works through performance analysis of message-passing decoding for rateless codes under AWGN channels, utilizing methods like density evolution \cite{kharel_analysis_2018} and Gaussian approximation \cite{etesami_raptor_2006}. Unfortunately, those methods do not apply well in our joint optimization setting. The density evolution approach, while accurate, is computationally intensive and cannot be done over all feature channels and samples in training time. The Gaussian approximation method, while much simpler and fairly accurate, is not suitable for NTRSCC due to the Dirichlet distribution of prior $\bm{\mu}$, which violates the symmetric assumption underlying this approach. Those issues severely complicate the analysis of decoding performance for iterations beyond $l=0$.
In light of this dilemma, we instead take a hybrid approach by optimizing the $l>0$ iterations with a regularization loss term during end-to-end training, which is
\begin{equation}
{L_{inc}} = \frac{1}{{\left\lceil \eta  \right\rceil kn}}\sum\limits_{l = 1}^{\left\lceil \eta  \right\rceil } {\sum\limits_{i = 1}^k {\sum\limits_{o = 1}^n {\left[ {m_{i,o}^{\left( {l + 1} \right)} - m_{i,o}^{\left( l \right)}} \right]} } },
\end{equation}
and only analyze the necessary condition for successful decoder initialization.
{
The following theorem shows that the mean of messages passed from input to output nodes will always increase in the first iteration, which indicates successful decoder initialization on a global scale.
\begin{Theorem} \label{thm1}
	For non-trivial values of $\bm{\mu}$ with ${\left| {{\mu _{i}}} \right|} > 0$, $i=1, 2, \ldots, k$, there is always
	\begin{equation} \label{eq36}
		\mathbb{E}\left[ {m_{i,o}^{\left( {1} \right)}} \right] > \mathbb{E}\left[ {m_{i,o}^{\left( 0 \right)}} \right].
	\end{equation}
\end{Theorem}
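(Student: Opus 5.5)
We work throughout under the all-zero (flipped) convention introduced above, with the modified update rules \eqref{eq10}--\eqref{eq11}, and first pin down the initial state. Before any output-to-input message has been computed, the decoder is initialised with $m_{o,i}^{(-1)} = 0$. Rule \eqref{eq11} then gives $m_{i,o}^{(0)} = \tilde\mu_i$, and hence
\[
\mathbb{E}\bigl[m_{i,o}^{(0)}\bigr] = \mathbb{E}\bigl[\tilde\mu_i\bigr].
\]
At the next step,
\[
m_{i,o}^{(1)} = \tilde\mu_i + \sum_{o'\neq o} m_{o',i}^{(0)},
\]
where each $m_{o',i}^{(0)}$ is produced by \eqref{eq10} from the channel LLR $\hat v_{o'}$ and the priors $\tilde\mu_{i'}$ of the other neighbours $i'$ of $o'$. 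The difference of the two expectations is therefore
\[
\mathbb{E}\bigl[m_{i,o}^{(1)}\bigr]-\mathbb{E}\bigl[m_{i,o}^{(0)}\bigr] = \mathbb{E}\Bigl[\sum_{o'\neq o} m_{o',i}^{(0)}\Bigr].
\]
The theorem thus reduces to showing that a single output-to-input message has strictly positive mean, $\mathbb{E}[m_{o',i}^{(0)}]>0$, together with the observation that node $i$ has at least one other neighbour with positive probability under the edge degree distribution $\iota$. If no such neighbour exists the sum is empty, so I will state the result in the averaged sense, weighting by $\iota(d)$.

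\textbf{Step 1: the check-node product has positive mean.} Write
\[
T = \tanh\bigl(m_{o',i}^{(0)}/2\bigr) = \tanh(\hat v_{o'}/2)\prod_{i'\neq i}\tanh(\tilde\mu_{i'}/2).
\]
The channel noise and the prior flips are independent, and the flips are independent across input nodes because the bits of $\mathbf b$ are assumed independent. The expectation therefore factorises:
\[
\mathbb{E}[T] = V_{o'}\prod_{i'\in S_d\setminus\{i\}} U_{i'}.
\]
By \eqref{eq45}, $U_{i'} = \bigl[2\,\mathrm{sigmoid}(|\mu_{i'}|)-1\bigr]\tanh(|\mu_{i'}|/2)$. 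This is strictly positive exactly when $|\mu_{i'}|>0$, which is the non-triviality hypothesis. For $V_{o'}$, the LLR $\hat v$ is Gaussian with mean equal to half its variance. Pairing $\hat v$ with $-\hat v$ gives the density ratio $e^{\hat v}>1$ for $\hat v>0$, and $\tanh$ is odd, so $V_{o'}>0$. Averaging over the random subset $S_d$ with law $p(S_d)$ and over the degree $d\sim\omega$ preserves strict positivity. Hence $\mathbb{E}[T]>0$.

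\textbf{Step 2: pass from $\mathbb{E}[T]>0$ to $\mathbb{E}[m_{o',i}^{(0)}]>0$. This is the main obstacle.} Since $m = 2\,\mathrm{artanh}(T)$ is a nonlinear map, a positive mean of $T$ does not automatically give a positive mean of $m$. The approach I would try first is a symmetry (consistency) argument.

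\emph{Consistency of each factor.} The channel LLR satisfies $p(-t)=e^{-t}p(t)$. I would check that the two-point prior law of $\tilde\mu_{i'}$ satisfies the same relation: since $p(-|\mu|)/p(|\mu|) = e^{-|\mu|}$, it does, by the sigmoid identity. This is precisely where the assumption that the prior estimate reflects the true bit statistics is used.

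\emph{Consistency is preserved.} Consistency is closed under the check-node (box-plus) operation on independent consistent LLRs, a standard fact from density evolution. So $m_{o',i}^{(0)}$ is itself a consistent LLR.

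\emph{Positive mean.} For any consistent LLR $m$, pairing $t$ with $-t$ gives
\[
\mathbb{E}[m] = \int_{t>0} t\,(1-e^{-t})\,p(t)\,dt \ge 0,
\]
with equality only if $m\equiv 0$ almost surely. That degenerate case is excluded because $\mathbb{E}[T]>0$.

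\emph{Fallback.} If consistency fails in some edge case of the joint averaging, I would instead use the two-sided symmetry directly on $T$ and compare $\mathrm{artanh}(t)$ with $\mathrm{artanh}(-t)$ weighted by the density ratio.

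\textbf{Step 3: sum over neighbours and conclude.} Summing the strictly positive expectation $\mathbb{E}[m_{o',i}^{(0)}]$ over the other neighbours $o'\neq o$, averaged under $\iota$, gives the strict inequality \eqref{eq36}.
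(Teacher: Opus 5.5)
Your proof is correct and follows the paper's skeleton: initialise $m_{i,o}^{(0)}=\tilde\mu_i$, reduce to $\mathbb{E}_{\iota}(d_i-1)\,\mathbb{E}[m_{o,i}^{(0)}]>0$, and compute $\mathbb{E}[\tanh(m_{o,i}^{(0)}/2)]=V_o\,\mathbb{E}_{\omega(d_o)}\mathbb{E}_{p(S_{d_o})}\prod_{i'\neq i}U_{i'}>0$. The routes differ at the step you flag as the obstacle.

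The paper passes from $\mathbb{E}[\tanh(m_{o,i}^{(0)}/2)]>0$ to $\mathbb{E}[m_{o,i}^{(0)}]>0$ by citing strict concavity of $\tanh$ on $(0,+\infty)$. That amounts to a Jensen step $\tanh(\mathbb{E}[m/2])\ge\mathbb{E}[\tanh(m/2)]$. But $m_{o,i}^{(0)}$ is negative with positive probability (a negative channel LLR, or a prior on the wrong side), and $\tanh$ is convex there. In general $\mathbb{E}[\tanh X]>0$ does not give $\mathbb{E}[X]>0$: take $X=10$ with probability $0.6$ and $X=-100$ with probability $0.4$.

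Your consistency argument supplies exactly what makes the inference valid. Calibration of $\bm\mu$ makes the two-point law of $\tilde\mu_i$ satisfy $p(-t)=e^{-t}p(t)$. Consistency survives the check-node combination of independent inputs and mixing over degrees and subsets. A non-degenerate consistent LLR has strictly positive mean. The paper's version is shorter, and it yields the explicit quantity that feeds the $\epsilon_1$ constraint in the next Remark (your Step 1 produces this too), but its final step is not justified as written. Yours is sound and shows where calibration actually does the work; indeed $U_i=\tanh^2(|\mu_i|/2)$ is just the consistency identity $\mathbb{E}[\tanh(L/2)]=\mathbb{E}[\tanh^2(L/2)]$ for the prior.

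Your fallback is unnecessary. $\bm\rho$ and $\bm\Omega$ depend on $\bm\mu$ only, so conditional on $\bm\mu$ the graph is independent of the bit flips and consistency holds exactly. The all-zero reduction is also legitimate, because $(1-2b_i)\prod_{i'\in S_{d_o}\setminus\{i\}}(1-2b_{i'})=1-2v_o$. You are also right that strictness needs $\mathbb{E}_{\iota}(d_i-1)>0$, which the paper assumes silently.
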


\begin{proof}
	We set the initial message from input nodes to output nodes as $m_{i,o}^{\left( 0 \right)} = {{\tilde \mu }_i}$, which gives $\mathbb{E}\left[ {m_{i,o}^{\left( 0 \right)}} \right] = {\mathbb{E}_{p\left( {{\bm{\tilde \mu }}} \right)}}{{\tilde \mu }_{i'}}$.

	For an output node of degree $d_o$, taking the expectation on both sides of Eq. (\ref{eq10}) gives
	\begin{align}
		\mathbb{E}\left[ {\left. {\tanh \left( {{{m_{o,i}^{\left( 0 \right)}}}/{2}} \right)} \right|{d_o},{\bf{\tilde \mu }}} \right] = V_o \cdot {\mathbb{E}_{p\left( {{S_{{d_o}}}} \right)}}\prod\limits_{i' \in {S_{{d_o}}}\hfill\atop
			i' \ne i\hfill} {\tanh \left( {{{{{\tilde \mu }_{i'}}}}/{2}} \right)}.
	\end{align}
	Further averaging over the degree of output edge $d_o$ and message prior ${\bm{\tilde \mu }}$, we have
	\begin{align} \label{eq39}
		\mathbb{E}\left[ {\tanh \left( {{{m_{o,i}^{\left( 0 \right)}}}/{2}} \right)} \right] = {V_o} \cdot {\mathbb{E}_{\omega \left( {{d_o}} \right)}}{\mathbb{E}_{p\left( {{\bm{\tilde \mu }}} \right)}}{\mathbb{E}_{p\left( {{S_{{d_o}}}} \right)}}\prod\limits_{i' \in {S_{{d_o}}}\hfill\atop
			i' \ne i\hfill} {\tanh \left( {{{{{\tilde \mu }_{i'}}}}/{2}} \right)}.
	\end{align}
	Similarly, taking expectation over both sides of Eq. (\ref{eq11}) gives
	\begin{equation}\label{eq40}
		\mathbb{E}\left[ {m_{i,o}^{\left( 1 \right)}} \right] = {\mathbb{E}_{p\left( {{\bm{\tilde \mu }}} \right)}}{{\tilde \mu }_{i'}} + {\mathbb{E}_{\iota \left( {{d_i}} \right)}}\left( {{d_i} - 1} \right)
		\cdot\mathbb{E}\left[ {m_{o,i}^{\left( 0 \right)}} \right],
	\end{equation}
	where $d_i$ is the input edge degree. Note that Eq. (\ref{eq40}) only requires ${m_{o,i}^{\left( 0 \right)}}$ to be i.i.d instead of Gaussian. Combining the above euqations, Eq. (\ref{eq36}) transforms to
	\begin{equation}\label{eq41}
		\mathbb{E}\left[ {m_{i,o}^{\left( 1 \right)}} \right] - \mathbb{E}\left[ {m_{i,o}^{\left( 0 \right)}} \right] = {\mathbb{E}_{\iota \left( {{d_i}} \right)}}\left( {{d_i} - 1} \right)
		\cdot\mathbb{E}\left[ {m_{o,i}^{\left( 0 \right)}} \right]>0,
	\end{equation}
	which is equivalent to $\tanh \left(\mathbb{E} {\left[ {{{m_{o,i}^{\left( 0 \right)}}}/{2}} \right]} \right) > 0$.
	Noticing the function $\tanh \left( x \right)$ is strictly concave on $\left( {0, + \infty } \right)$, we only have to show that $\mathbb{E}\left[ {\tanh \left( {{{m_{o,i}^{\left( 0 \right)}}}/{2}} \right)} \right] > 0$.
	Further transforming Eq. (\ref{eq39}), we have
	\begin{align}
		&\mathbb{E}\left[ {\tanh \left( {{{m_{o,i}^{\left( 0 \right)}}}/{2}} \right)} \right]  \notag\\
		&= {V_o} \cdot {\mathbb{E}_{\omega \left( {{d_o}} \right)}}{\mathbb{E}_{p\left( {{S_{{d_o}}}} \right)}}\sum\limits_{{\bm{\tilde \mu }}} {\prod\limits_{i \in {S_{{d_o}}}} {p\left( {\left. {{{\tilde \mu }_i}} \right|{\mu _i}} \right)} \prod\limits_{i' \in {S_{{d_o}}}\hfill\atop
				i' \ne i\hfill} {\tanh \left( {{{{{\tilde \mu }_{i'}}}}/{2}} \right)} } \notag\\
		&= {V_o} \cdot {\mathbb{E}_{\omega \left( {{d_o}} \right)}}{\mathbb{E}_{p\left( {{S_{{d_o}}}} \right)}}\prod\limits_{i' \in {S_{{d_o}}}\hfill\atop
			i' \ne i\hfill} U_{i'}.
	\end{align}
	It is easy to verify that for all ${\left| {{\mu _{i'}}} \right|} > 0$, there is $U_{i'}>0$, therefore $\mathbb{E}\left[ {\tanh \left( {{{m_{o,i}^{\left( 0 \right)}}}/{2}} \right)} \right] > 0$, and Eq. (\ref{eq41}) always holds.
\end{proof}}
While Theorem \ref{thm1} indicates that BP decoding will always successfully start on a global scale, it does not guarantee good performance. In fact, for input nodes where prior values deviate from the true bit information, the local message update originating from that node could still lead to the wrong sign direction. Therefore, we must make sure the mean is sufficiently large to impose correct update direction on most messages. We introduce the first constraint as follows. 
{\begin{Remark}
Noticing that $U_j<1$, and that $\prod\limits_{i' \in {S_{{d_o}}}\hfill\atop
		i' \ne i\hfill} {{U_{i'}}}  > \prod\limits_{i \in {S_{{d_o}}}} {{U_i}},$
we can introduce the following constraint to facilitate successful initialization of BP decoding:
\begin{equation}
	{V_o} \cdot {\mathbb{E}_{\omega \left( {{d_o}} \right)}}{\mathbb{E}_{p\left( {{S_{{d_o}}}} \right)}}\prod\limits_{i \in {S_{{d_o}}}} {{U_i}}  > {\epsilon_1},
\end{equation}
with constant $\epsilon_1>0$.
\end{Remark}}

\subsection{Per-Symbol Mutual Information}
In this part, we {design the} conditional mutual information $I\left( {\left. {v;{\bf{x}}} \right|{\bf{\tilde z}}} \right)$, which illustrates the amount of information on image ${\bf{x}}$ conveyed by each rateless symbol ${v}$ with access to side information ${{\bf{\tilde z}}}$.
{The per-symbol mutual information serves as a regularizer that prevents ${\bm{\rho }}$ from collapsing to bits with high certainty to ensure successful BP initialization, which provides very little information regarding the source.}
Note that although we aim to minimize the total transmission rate across all rateless symbols, it is generally more ideal to achieve this by reducing the total number of transmitted symbols $n$, while maintaining the mutual information of each symbol, in order to achieve useful representation across a variety of transmission rates.
{
\begin{Theorem}
	The bit-wise mutual information is lower-bounded by 
	\begin{equation}\label{eq46}
		I\left( {\left. {v;{\bf{x}}} \right|{\bf{\tilde z}}} \right) \ge \frac{1}{2}{\mathbb{E}_{p\left( {\bf{x}} \right)}}{\left[ {{\mathbb{E}_{\omega \left( {{d_o}} \right)}}{\mathbb{E}_{p\left( {{S_{{d_o}}}} \right)}}\prod\limits_{j \in {S_{{d_o}}}} {{U_j}}  - 1} \right]^2}.
	\end{equation}
\end{Theorem}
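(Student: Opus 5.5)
We will bound the per-symbol conditional mutual information through a KL divergence and then apply Pinsker's inequality.

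\emph{Step 1: rewrite the mutual information as an expected KL divergence.} Given $\bf{x}$, the side information $\tilde{\bf{z}}$ and the bits $\bf{b}$ are fixed by the transforms, and $v=\bigoplus_{j\in S_{d_o}} b_j$ is a binary variable. Its only randomness comes from the degree $d_o$ and the subset $S_{d_o}$. Hence
\begin{equation*}
I(v;{\bf{x}}\mid\tilde{\bf{z}})=\mathbb{E}_{p({\bf x})}D_{KL}\big[q(v\mid{\bf x})\,\|\,p(v\mid\tilde{\bf z})\big],
\end{equation*}
where $p(v\mid\tilde{\bf z})$ is the predictive distribution of the coded bit given only the priors $\bm{\mu}$. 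This matches the first term of Eq. (\ref{eq20}) and the chain of equalities in Eq. (\ref{eq28}).

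\emph{Step 2: apply Pinsker's inequality.} For binary distributions, $D_{KL}(q\|p)\ge 2\,\mathrm{TV}(q,p)^2=\tfrac12\big(\mathbb{E}_q[s]-\mathbb{E}_p[s]\big)^2$ with $s=(-1)^v\in\{\pm1\}$, since $\mathrm{TV}=\tfrac12|\mathbb{E}_q s-\mathbb{E}_p s|$. We then compute the two means $\mathbb{E}_q s$ and $\mathbb{E}_p s$.

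\emph{Step 3: compute the two means.} Under the all-zero convention with flipped priors $\tilde\mu$, the true coded bit is $0$, so $\mathbb{E}_q[s]=1$. Under $p(v\mid\tilde{\bf z})$ the bits are independent with log-likelihood ratios $\tilde\mu_j$, which gives $\mathbb{E}[(-1)^{b_j}]=\tanh(\tilde\mu_j/2)$. For independent bits the XOR satisfies the standard piling-up identity $\mathbb{E}[(-1)^{v}]=\prod_{j\in S_{d_o}}\tanh(\tilde\mu_j/2)$. Averaging over $d_o\sim\omega$ and $S_{d_o}\sim p(S_{d_o})$ is linear, and averaging over the flip of $\tilde\mu_j$ conditioned on $|\mu_j|$ factorizes exactly as in the proof of Theorem \ref{thm1}. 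Each factor therefore becomes $U_j$ from Eq. (\ref{eq45}), so
\begin{equation*}
\mathbb{E}_p[s]=\mathbb{E}_{\omega(d_o)}\mathbb{E}_{p(S_{d_o})}\prod_{j\in S_{d_o}}U_j.
\end{equation*}

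\emph{Step 4: combine.} Substituting into Pinsker gives $\tfrac12\big[\mathbb{E}_\omega\mathbb{E}_{p(S)}\prod U_j-1\big]^2$. Taking $\mathbb{E}_{p({\bf x})}$ yields Eq. (\ref{eq46}).

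\emph{Main obstacle.} The delicate point is Step 3. It requires justifying that the average over the random flips can be pushed inside the KL divergence before applying Pinsker. One option is joint convexity of KL together with Jensen's inequality on the square. Alternatively, one can define $p(v\mid\tilde{\bf z})$ as the flip-averaged predictive distribution, so that its mean is exactly $\mathbb{E}\prod U_j$, and note that averaging over the subset choice is linear inside the mean. I would try the second route first. Any Jensen step it needs goes in the favorable direction, since $(\cdot)^2$ is convex, so the resulting lower bound survives.
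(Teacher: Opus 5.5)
\textbf{The failing step is not the one you flagged.} Your outline is the paper's own argument: expected KL, then Pinsker on the binary symbol, then the all-zero flip, then averaging the flips into $U_j$. It inherits the paper's flaw.

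What fails is $\mathbb{E}_q[s]=1$ in Step 3. In Step 1 you marginalize over $(d_o,S_{d_o})$, so $q(v\mid\mathbf{x})$ is a mixture with $\mathbb{E}_q[s]=\mathbb{E}_{d_o,S}[\epsilon_S]$, where $\epsilon_S=\prod_{j\in S}(1-2b_j)\in\{\pm1\}$. Meanwhile $\mathbb{E}_p[s]=\mathbb{E}_{d_o,S}\prod_{j\in S}\tanh(\mu_j/2)$ uses the \emph{unflipped} priors. The $\tilde\mu_j$ depend on $\mathbf{b}$, so they cannot parametrize a law conditioned on $\tilde{\mathbf{z}}$ alone.

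The Pinsker gap is therefore $\mathbb{E}_{d_o,S}\bigl[\epsilon_S\bigl(1-\prod_{j\in S}\tanh(\tilde\mu_j/2)\bigr)\bigr]$. The all-zero convention amounts to pulling the $S$-dependent sign $\epsilon_S$ out of this average. That is invalid, because terms of opposite sign cancel. The paper takes the same step when it equates $\mathbb{E}_{p(S_{d_o})}\bigl[\prod_{j}\tanh(\mu_j/2)-\prod_{j}(1-2b_j)\bigr]$ with $\mathbb{E}_{p(S_{d_o})}\mathbb{E}_{p(\tilde{\bm{\mu}})}\bigl[\prod_{j}\tanh(\tilde\mu_j/2)-1\bigr]$.

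For the $S$-marginalized information the inequality is in fact false. Take $\bm{\mu}\equiv 0$ (or $|\mu_j|$ arbitrarily small), uniform $\bm{\rho}$, and degree-one symbols. Then $U_j=\tanh^2(|\mu_j|/2)=0$, so the right-hand side is $\tfrac12$. But $p(v=0\mid\tilde{\mathbf z})=\tfrac12$, and $q(v=0\mid\mathbf x)$ is the fraction $f$ of zeros in an i.i.d.\ uniform $\mathbf b$. By $D_{KL}\le\chi^2$, $I(v;\mathbf x\mid\tilde{\mathbf z})\le\mathbb{E}[4(f-\tfrac12)^2]=1/k$.

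Both of your remedies target the flip average, which is harmless. Neither touches the $S$-mixture. Joint convexity cannot help: $D_{KL}(\mathbb{E}_S q_S\,\|\,\mathbb{E}_S p_S)\le\mathbb{E}_S D_{KL}(q_S\|p_S)$ bounds the mixture KL from above, not below. Redefining $p(v\mid\tilde{\mathbf z})$ simply changes the quantity being bounded.

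\textbf{How to repair it.} Condition on the code graph, which the decoder rebuilds from the shared seed anyway. For fixed $(d_o,S)$, $q(v\mid\mathbf x,S)$ is a point mass at $\bigoplus_{j\in S}b_j$. Moreover $\epsilon_S\prod_{j\in S}\tanh(\mu_j/2)=\prod_{j\in S}\tanh(\tilde\mu_j/2)$ holds exactly, so the all-zero reduction is legitimate here. The KL equals $-\ln\tfrac12\bigl(1+\prod_{j\in S}\tanh(\tilde\mu_j/2)\bigr)$, and Pinsker gives at least $\tfrac12\bigl(1-\prod_{j\in S}\tanh(\tilde\mu_j/2)\bigr)^2$.

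Your convexity remark then finishes the job. Apply one Jensen step on the square, jointly over $(d_o,S)$ and over $\mathbf b$ given $\bm{\mu}$. The bits are independent and calibrated, and $S$ is independent of $\mathbf b$ given $\tilde{\mathbf z}$. Hence the conditional mean of $\prod_{j\in S}\tanh(\tilde\mu_j/2)$ is $\prod_{j\in S}U_j$, which yields the right-hand side.

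This bounds $I(v;\mathbf x\mid\tilde{\mathbf z},d_o,S_{d_o})$. Since the graph is independent of $\mathbf x$ given $\tilde{\mathbf z}$, that quantity is at least $I(v;\mathbf x\mid\tilde{\mathbf z})$. So the theorem holds only in this graph-conditioned reading.

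\textbf{A second issue: $\Omega$ versus $\omega$.} A freshly generated symbol has degree distributed by $\Omega$, so the Jensen step produces $\mathbb{E}_{\Omega(d_o)}$. Both you and the statement instead use the size-biased edge law $\omega$. That puts more weight on high degrees, where $\prod_{j\in S}U_j$ is small, and is not justified. For example, take $\Omega(1)=0.95$, $\Omega(k)=0.05$, $|\mu_j|\equiv 5$, $k=256$. The graph-conditioned information is about $0.07$ nats, while the $\omega$-weighted right-hand side is about $0.43$.
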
}
\begin{proof}
	From Eq. (\ref{eq28}), we have
	\begin{align}
		I\left( {\left. {v;{\bf{x}}} \right|{\bf{\tilde z}}} \right)
		&={\mathbb{E}_{p\left( {\bf{x}} \right)}}{D_{KL}}\left[ {\left. {q(v\mid {\bf{x}})} \right\|p(v\mid {\bf{\tilde z}})} \right]  \notag\\
		&= {\mathbb{E}_{p\left( {\left. {{\bf{\tilde z}}} \right|{\bf{x}}} \right)}}{\mathbb{E}_{p\left( {\bf{x}} \right)}}{\mathbb{E}_{q(v\mid {\bf{x}},{\bf{\tilde z}})}}\left[ {\ln \frac{{q(v\mid {\bf{x}},{\bf{\tilde z}})}}{{p(v\mid {\bf{\tilde z}})}}} \right] \notag\\
		&= {\mathbb{E}_{p\left( {\bf{x}} \right)}}{D_{KL}}\left[ {\left. {q(v\mid {\bf{x}},{\bf{\tilde z}})} \right\|p(v\mid {\bf{\tilde z}})} \right].
	\end{align}
	When the prior probability $\bm{\mu}$, the encoding degree distributions ${\bm{\Omega }}$ and the selection probability ${\bm{\rho }}$ are all inferred from the side information, the log-likelihood of rateless symbol $v$ is given by 
	\begin{equation}
		\tanh \left[ {{{LLR\left( {\left. {v} \right|{\bf{\tilde z}},{d_o},{S_{{d_o}}}} \right)}}/{2}} \right] = \prod\limits_{j \in {S_{{d_o}}}} {\tanh \left( {{{{\mu _j}}}/{2}} \right)},
	\end{equation}
	from which we can easily see it follows a Bernoulli distribution satisfying
	\begin{equation}
		p\left( {\left. {v = 0} \right|{\bf{\tilde z}}} \right) = \frac{1}{2}{\mathbb{E}_{\omega \left( {{d_o}} \right)}}{\mathbb{E}_{p\left( {{S_{{d_o}}}} \right)}}\left[ {1 + \prod\limits_{j \in {S_{{d_o}}}} {\tanh \left( {{{{\mu _j}}}/{2}} \right)} } \right].
	\end{equation}
	Similarly, when the source ${\bf{x}}$ is given, the encoding degree distributions ${\bm{\Omega }}$ and the selection probability ${\bm{\rho }}$ are inferred from the side information, the rateless symbol $v$ follows a Bernoulli distribution satisfying
	\begin{equation}
		p\left( {\left. {v = 0} \right|{\bf{x}},{\bf{\tilde z}}} \right) = \frac{1}{2}{\mathbb{E}_{\omega  \left( {{d_o}} \right)}}{\mathbb{E}_{p\left( {{S_{{d_o}}}} \right)}}\left[ {1 + \prod\limits_{j \in {S_{{d_o}}}} {\left( {1 - 2{b_j}} \right)} } \right].
	\end{equation}
	The calculation of both Bernoulli parameters require iterating over all subsets ${{S_{{d_o}}}}$ of shapes $d_o=1, 2, 3, \ldots, k$, making it impossible to calculate ${D_{KL}}\left[ {\left. {q(v\mid {\bf{x}},{\bf{\tilde z}})} \right\|p(v\mid {\bf{\tilde z}})} \right]$ during training time. Therefore, we instead investigate the properties of its lower bound.
	Utilizing the Pinkser bound for KL divergence between Bernoulli distributions, we obtain
	\begin{align}
		&{D_{KL}}\left[ {\left. {q(v\mid {\bf{x}},{\bf{\tilde z}})} \right\|p(v\mid {\bf{\tilde z}})} \right] \notag\\
		&\ge \frac{1}{2}{\left\{ {{\mathbb{E}_{\omega \left( {{d_o}} \right)}}{\mathbb{E}_{p\left( {{S_{{d_o}}}} \right)}}\left[ {\prod\limits_{j \in {S_{{d_o}}}} {\tanh \left( {{{{\mu _j}}}/{2}} \right) - \prod\limits_{j \in {S_{{d_o}}}} {\left( {1 - 2{b_j}} \right)} } } \right]} \right\}^2} \notag\\
		&= \frac{1}{2}{\left\{ {{\mathbb{E}_{\omega \left( {{d_o}} \right)}}{\mathbb{E}_{p\left( {{S_{{d_o}}}} \right)}}{\mathbb{E}_{p\left( {{\bf{\tilde \mu }}} \right)}}\left[ {\prod\limits_{j \in {S_{{d_o}}}} {\tanh \left( {{{{{\tilde \mu }_j}}}/{2}} \right) - 1} } \right]} \right\}^2} \notag\\
		&= \frac{1}{2}{\left[ {{\mathbb{E}_{\omega \left( {{d_o}} \right)}}{\mathbb{E}_{p\left( {{S_{{d_o}}}} \right)}}\prod\limits_{j \in {S_{{d_o}}}} {{U_j}}  - 1} \right]^2},
	\end{align}
	therefore Eq. (\ref{eq46}) holds.
\end{proof}

To make sure each rateless symbol contains enough useful information on the source, we must also constrain the bit-wise beyond a certain threshold.
{
\begin{Remark}
We can introduce the following constraint to facilitate useful representation for each rateless symbol:
\begin{equation} \label{eq52}
	\frac{1}{2}{\left[ {{\mathbb{E}_{\omega \left( {{d_o}} \right)}}{\mathbb{E}_{p\left( {{S_{{d_o}}}} \right)}}\prod\limits_{j \in {S_{{d_o}}}} {{U_j}}  - 1} \right]^2} > \epsilon_2,
\end{equation}
with constant $\epsilon_2>0$.
\end{Remark}}

\subsection{Unequal Error Protection Strategy}
\textbf{Determination of $\rho_j$}: To satisfy Eq. (\ref{eq46}) and Eq. (\ref{eq52}), a heuristic solution can be given by $\rho_j \propto \exp{(\lambda U_j)}$,
where $\lambda$ is a learnable parameter. Under this choice, the probability of a subset $S_d$ becomes
\begin{equation}
p_\lambda(S_d) = {\exp\!\left(\lambda \sum_{j \in S_d} U_j\right)}/{\sum_{|S_d|=d} \exp\!\left(\lambda \sum_{j \in S_d} U_j\right)}.
\end{equation}
This log–linear form ensures that the expectation
\begin{equation}
\Psi(\lambda) = \mathbb{E}_{\omega(d)} \, \mathbb{E}_{p_\lambda(S_d)} \prod_{j \in S_d} U_j 
\end{equation}
is a continuous function of $\lambda$, and increases monotonically with $\lambda$. 
Hence, by tuning $\lambda$ one can adjust $\Psi(\lambda)$ to satisfy both inequality constraints simultaneously. In practice, the parameters $\epsilon_1$ and $\epsilon_2$ are determined implicitly during end-to-end training.

\subsection{Decoding Complexity}
We define decoding complexity as the number of arithmetic operations that would occur in the BP decoder.
From Eq. (\ref{eq10}) and Eq. (\ref{eq11}), it can be seen that the belief propagation decoding involves passing update messages iteratively between sets of input and output nodes. Therefore, the computational complexity of BP decoding is determined jointly by the number of iterations and the structure of the decoding graph. 
{
	\begin{Theorem} \label{thm3}
	For an LT code with $k$ input symbols, $n$ output symbols, encoding degree distribution $\Omega(d)$ and iteration number ${\left\lceil \eta  \right\rceil }$, the total expected complexity is calculated by
	\begin{equation}\label{eq59}
		\text{Comp} = \left\lceil \eta  \right\rceil  \cdot \left[ {8n\sum\limits_{d = 1}^{d_{max}} {d \cdot \Omega \left( d \right)}  + 3n + k} \right].
	\end{equation}
\end{Theorem}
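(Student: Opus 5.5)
The plan is to count the arithmetic operations performed by one full BP iteration under the modified update rules (\ref{eq10}) and (\ref{eq11}), take the expectation over the random bipartite graph, and multiply by the number of iterations $\lceil \eta \rceil$. The graph is generated with degree distribution $\Omega(d)$. Its expected number of edges is $E = n\sum_{d=1}^{d_{max}} d\,\Omega(d)$, since each of the $n$ output nodes draws its degree independently from $\Omega$. Every per-iteration cost will therefore be written as a linear combination of $E$, $n$ and $k$. The decomposition I will aim for is $8E + 3n + k$.

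\textbf{Check-node (output) side.} I will implement (\ref{eq10}) in the standard way, which avoids the quadratic cost of the extrinsic product $\prod_{i'\ne i}$. Per edge, compute $\tanh(m_{i,o}/2)$: one halving and one $\tanh$. At each output node $o$, form the full product with the channel term $\tanh(\hat v_o/2)$; this costs one halving and one $\tanh$ per output node plus one multiplication per incident edge. Then obtain each extrinsic value by dividing the full product by the edge's own factor, costing one division per edge. Finally apply $2\,\mathrm{artanh}(\cdot)$ per edge, costing two operations. Tallying the per-edge and per-output-node counts gives a fixed number of operations per edge and a fixed number per output node.

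\textbf{Variable-node (input) side and iteration count.} For (\ref{eq11}) and the marginal $\mathcal{M}_i$, I will compute $\mathcal{M}_i = \mu_i + \sum_o m_{o,i}$. This costs one addition per incident edge plus one addition per input node for $\mu_i$. Each extrinsic message is then $m_{i,o} = \mathcal{M}_i - m_{o,i}$, costing one subtraction per edge. Summing over input nodes gives $k$ operations for the prior terms. The remaining per-output-node overhead, such as the channel $\tanh$ and the product initialization, should total $3n$. Collecting the per-edge costs and taking the expectation $\mathbb{E}[E]=n\sum_d d\,\Omega(d)$ by linearity yields the bracket in (\ref{eq59}). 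Because iterations are identical in structure and the decoder runs $l=1,\ldots,\lceil\eta\rceil$, multiplying by $\lceil\eta\rceil$ completes the argument.

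\textbf{Main obstacle.} The hard step is fixing a counting convention that gives exactly the constants $8$ and $3$. Which operations count as one unit ($\tanh$, $\mathrm{artanh}$, halving/doubling, division versus recomputing leave-one-out products) is a modelling choice. I will state explicitly the unit-cost primitives and the leave-one-out scheme above, then verify the per-edge tally sums to $8$ and the per-output tally to $3$. If the count does not match, I will adjust which operations are attributed per edge versus per node, for example computing the halving of $\hat v_o$ once per node. The expectation step itself is routine by linearity and needs no independence beyond the i.i.d. degree sampling.
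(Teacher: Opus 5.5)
Your proposal is correct and follows the paper's proof: the paper also uses $\mathbb{E}[E]=n\sum_{d} d\,\Omega(d)$, lists per-iteration operation counts ($\tanh$: $E+n$, $\tanh^{-1}$: $E$, division: $2E+n$, multiplication: $2E+n$, addition: $E+k$, subtraction: $E$), sums them to $8E+3n+k$, and multiplies by $\lceil\eta\rceil$. Your leave-one-out scheme reproduces that table entry for entry, provided the third per-output-node operation is counted as one extra multiplication per output node (your ``product initialization''), which is exactly the $+n$ in the paper's multiplication count, so the constants $8$ and $3$ come out as stated.
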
}
\begin{proof}
	For an LT code of length $n$ with encoding degree distribution $\Omega(d)$, the expected total number of edges $E$ is determined solely by the encoding degree distribution, given by
	\begin{equation}
		{\mathbb{E}_{\Omega \left( d \right)}}E = n\sum\limits_{d = 1}^{d_{max}} {d \cdot \Omega \left( d \right)},
	\end{equation}
	where $d_{max}$ is the maximum encoding degree.
	
	\begin{table}[t] 
		\centering
		\caption{Arithmetic operations per BP decoding iteration}
		\label{tabel2}
		\begin{tabular}{|c|c|c|c|}
			\hline
			\textbf{Operation} & \textbf{Count} & \textbf{Operation} & \textbf{Count} \\
			\hline
			$\tanh(\cdot)$ & $E + n$ & $\tanh^{-1}(\cdot)$ & $E$ \\
			$/ $ & $2E + n$ & $\times$ & $2E + n$ \\
			$+$ & $E + k$ & $-$ & $E$ \\
			\hline
		\end{tabular}
		\vspace{-10pt}
	\end{table}
	
	For a graph with $E$ edges, $k$ input nodes and $n$ output nodes, the number of arithmetic operations for each iteration is summarized in Table \ref{tabel2}. From the table, the total expected number of operations per iteration is given by
	\begin{equation}
		{\mathbb{E}_{\Omega \left( d \right)}}\left[ {8E + 3n + k} \right] = 8n\sum\limits_{d = 1}^{d_{max}} {d \cdot \Omega \left( d \right)}  + 3n + k.
	\end{equation}
	Therefore, the total expected complexity for ${\left\lceil \eta  \right\rceil }$ iterations is calculated by
	\begin{align}
		\text{Comp} = \left\lceil \eta  \right\rceil  \cdot \left[ {8n\sum\limits_{d = 1}^{d_{max}} {d \cdot \Omega \left( d \right)}  + 3n + k} \right],
	\end{align}
	which proofs Theorem \ref{thm3}.
\end{proof}

Eq. (\ref{eq59}) reflects that increasing either the iteration count, the number of symbols, or the graph density directly raises the decoding cost. Therefore, we impose a maximum encoding degree constraint $d_{max}$.
Meanwhile, we adopt a known good degree distribution as the starting point of our optimization, following the analysis for stability condition in \cite{etesami_raptor_2006}, we set $\Omega(2)>1 / \ln(16)$.

\begin{figure*}[ht]
	\setlength{\belowcaptionskip}{-0.56cm}
	\vspace{-25pt}
	\centering
	\includegraphics[width=17cm]{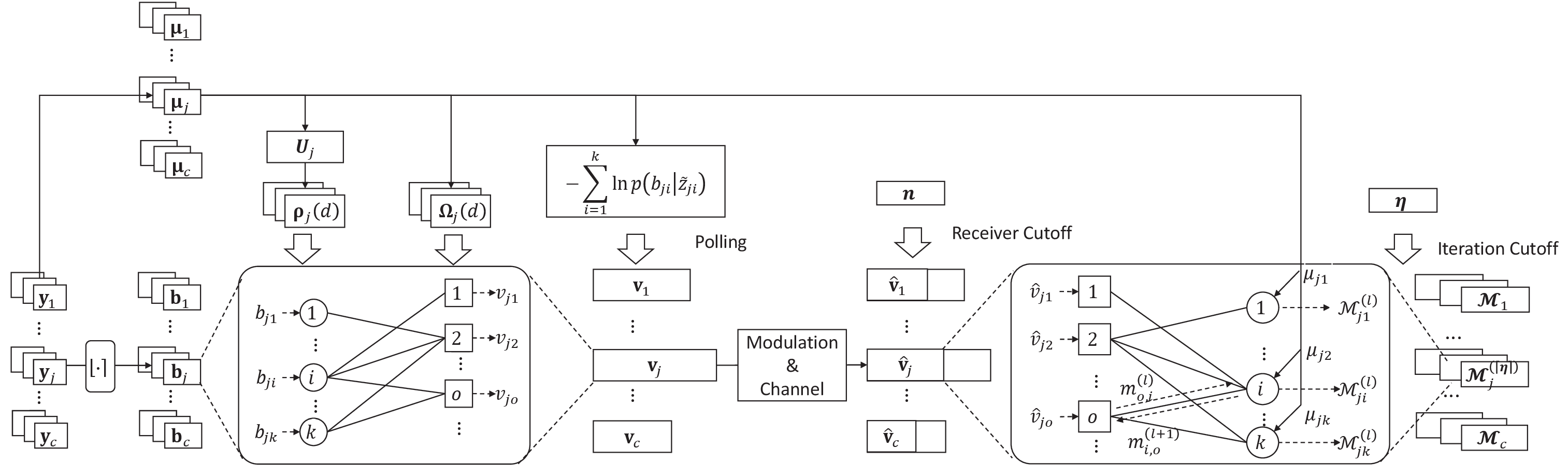}
	\caption{The LT encoding and decoding process in NTRSCC. Consider the j-th feature channel. The extracted latent features ${\bf{y}}_j$ are quantized to obtain bitstream ${\bf{b}}_j$. ${\bf{y}}_j$ also produces dimension-wise prior ${\bm{\mu}}_j$, which guides the prediction of degree distributions ${\bm{\Omega }}_j$ and ${\bm{\rho}}_j$. While the rateless symbols are produced indefinitely, the encoder employs a polling strategy based on the sum log-probability of each feature channel to generate channel symbols ${\bf{v}}_j$. The decoder stops receiving more rateless channel symbols when the total number of collected symbols reaches $\bf{n}$. The BP decoder processes demodulated LLR ${\bf{\hat v}}_j$ and the prior LLR ${\bm{\mu}}_j$, until the number of iterations reach ${\left\lceil \eta  \right\rceil }$, and the final outputs are the soft likelihoods $\mathcal{M}_j$.}
	\label{fig4}
\end{figure*}

\subsection{Detailed System Architecture}

The architecture of analysis and synthesis transforms are {detailed as follows.}
The analysis transform $g_a(\cdot; \phi_g)$ and synthesis transform $g_s(\cdot; \theta_g)$ are implemented as cascades of convolutional layers interleaved with Generalized Divisive Normalization (GDN) and its inverse (IGDN), respectively. Specifically, $g_a$ consists of three convolutional layers with kernel size $5 \times 5$ and stride 2, followed by GDN activations, and terminates with a sigmoid nonlinearity to bound the latent representation $\bf{y}$ between 0 and 1. The synthesis transform $g_s$ mirrors this structure using transposed convolution layers with matching kernel and stride parameters, interleaved with IGDN activations and concluding with a sigmoid output layer. The hyperprior analysis transform $h_a(\cdot; \phi_h)$ comprises three convolutional layers: a $3 \times 3$ layer with stride 1 followed by ReLU, then two $5 \times 5$ layers with stride 2 and ReLU activations. The hyperprior synthesis transform $h_s(\cdot; \theta_h)$ uses the same structure in reverse, employing transposed convolutions with ReLU activations to reconstruct the prior $\bm{\mu}$ that predicts the likelihhod of quantized bit sequence. 

The LT encoding and decoding processes in NTRSCC are outlined in Fig. \ref{fig4}. 
Consider the $j$-th channel of latent features, after obtaining the dimension-wise priors $\bm{\mu}_j$, the encoder calculates ${\bf{U}}_j$ using Eq. (\ref{eq45}). 
To approximately construct a generation matrix in a differentiable way, we propose the following procedures.
\begin{itemize}
	\item \textbf{Encoding degree sampling.}  
	Given a degree distribution ${{\bm{\Omega }}_j} = \left[ {{\Omega _{j1}},{\Omega _{j2}}, \ldots ,{\Omega _{j{d_{\max }}}}} \right]$, we apply the Gumbel-Softmax trick \cite{jang2017gumbelsoftmax} to obtain a one-hot vector representing the sampled degree:
	\begin{equation} \label{eq58}
		d \sim \mathrm{GumbelSoftmax}\!\left(\ln(\bm{\Omega}_j), \tau\right),
	\end{equation}
	where $\tau$ is the temperature parameter. With $\tau \to 0$, this approximates exact categorical sampling from $\omega$.
	
	\item \textbf{Encoding index sampling.}  
	Given degree $d$ and ${{\bf{U}}_j} = \left[ {{U _{j1}},{U _{j2}}, \ldots ,{U _{jk}}} \right]$, for each input symbol $i$, we perturb $(U_{ji})$ with independent Gumbel noise $g_{ji} \sim \mathrm{Gumbel}(0,1)$ and scale by a small temperature:
	\begin{equation}
		\ell_{ji} = (U_{ji} + g_{ji})/{\tau}.
	\end{equation}
	Sorting the logits $\ell_{ji}$ and taking the $d$-th largest sample $\ell (d)$ as threshold, we construct a relaxed mask
	\begin{equation} \label{eq60}
		{m_{ji}}(d) = {\rm{sigmoid}}\left[ {{\ell _{ji}} - \ell (d)} \right],
	\end{equation}
	which approximate binary indicators. The resulting mask selects approximately $d$ indices with probability proportional to $\prod_{i \in S_d} \rho_{ji}$.
\end{itemize}
The above steps are repeated until the maximum number of symbols is reached, yielding a relaxed generator matrix 
\begin{equation} \label{eq61}
	{\bf{G}} = \left[ {{m_{j1}},{m_{j2}}, \ldots ,{m_{jN}}} \right]
\end{equation}
suitable for encoding, while allowing gradient-based optimization.
After obtaining the $c$ streams of encoded rateless symbols ${\bf{v}}_1, {\bf{v}}_2, \ldots, {\bf{v}}_c$, the decoder applies random polling based on the sum log-probability of the message bits in each stream, in order to decide which stream to transmit in the next time slot. Specifically, the probability for transmitting a symbol from ${\bf{v}}_j$ is proportional to $- \sum_{i=1}^{k} \log_2 p(b_{ji} \mid \tilde{z}_{ji})$.

Each receiver keeps receiving more symbols from the wireless channel, until a total of $n = \frac{{\gamma ck}}{{{\rm{Cap}}\left( \sigma  \right)}}$ channel symbols are collected. For the $j$-th feature channel at the receiver, the average number of received symbols is hence given by Eq. (\ref{eq21}), where $\gamma$ comes from a learned transformation $\gamma = {f_{s1}}\left( {{\bm{\mu }},\alpha ,\beta ;\vartheta } \right)$. During training, we adopt a simplified approach and directly produce $n_j$ rateless symbols for the $j$-th feature channel during the encoding process.
The BP decoder iteratively updates the messages between input nodes and output nodes along the edges of the decoding graph, using Eq. (\ref{eq10}) and Eq. (\ref{eq11}), until the maximum number of iterations ${\left\lceil \eta  \right\rceil }$ is reached. Here, $\eta$ comes from a learned transformation $\eta = {f_{s2}}\left( {{\bm{\mu }},\alpha ,\beta ;\vartheta } \right)$. {The final outputs of the BP decoder are the soft likelihoods of each feature channel $[{\mathcal{M}}_1, {\mathcal{M}}_2, \ldots, {\mathcal{M}}_c]$.}

\subsection{Training of NTRSCC}
Recall the optimization problem defined in Eq. (\ref{eq3}), which is to reduce the average distortion across all connected users, while satisfying diverse rate and complexity constraints at individual users.
Therefore, we must optimize our system under different tradeoff levels between rate, complexity and distortion, while simultaneously enabling the traversal of those tradeoffs using a single encoder and decoder pair. 
To achieve this, we adopt different Lagrangian multiplier values $\beta_j$ for each user during training, and explicitly condition the number of transmitted features $N_j$ on $\beta_j$ using a scaling module.
Moreover, we apply an upper bound to the first term in Eq. (\ref{eq20}) to circumvent the intractable subset iterations, given by
\begin{align}
&{D_{KL}}\left[ {\left. {q({v_{ji}}\mid {\bf{x}},{\bf{\tilde z}})} \right\|p({v_{ji}}\mid {\bf{\tilde z}})} \right] \notag\\
&\le 4{\left\{ {{\mathbb{E}_{\omega \left( {{d_o}} \right)}}{\mathbb{E}_{p\left( {{S_{{d_o}}}} \right)}}\left[ {\prod\limits_{i \in {S_{{d_o}}}} {\tanh \left( {{\mu _{ji}}/2} \right) - \prod\limits_{i \in {S_{{d_o}}}} {\left( {1 - 2{b_{ji}}} \right)} } } \right]} \right\}^2} \notag\\
&= 4{\left\{ {{\mathbb{E}_{\omega \left( {{d_o}} \right)}}{\mathbb{E}_{p\left( {{S_{{d_o}}}} \right)}}{\mathbb{E}_{p\left( {{\bf{\tilde \mu }}} \right)}}\left[ {\prod\limits_{i \in {S_{{d_o}}}} {\tanh \left( {{{\tilde \mu }_{ji}}/2} \right) - 1} } \right]} \right\}^2} \notag\\
&\le 4{\left[ {{\mathbb{E}_{\omega \left( {{d_o}} \right)}}\prod\limits_{i \in {S_{\min {d_o}}}} {{U_{ji}}}  - 1} \right]^2} \buildrel \Delta \over = \hat D_{KL}^j
\end{align}
where ${{S_{\min {d_o}}}}$ denotes the subset containing the $d_0$ smallest elements.
During training, we first optimize the nonlinear transformation functions with the NTC loss function to improve reconstruction over quantized bits:
\begin{equation} \label{eq63}
{L_{NTC}} \buildrel \Delta \over = {\mathbb{E}_{p\left( {\bf{x}} \right)}}\left[ { - \log p({\bf{b}}|{\bf{\tilde z}}) - \log p({\bf{\tilde z}}) + {\rm{MSE}}\left( {{\bf{x}},{\bf{\hat x}}} \right)} \right],
\end{equation}
and optimize the rateless coding parameters using the decoder cross entropy to improve error performance:
\begin{equation} \label{eq64}
{L_{LT}} \buildrel \Delta \over = {\mathbb{E}_{\pi \left( {{\sigma ^2}} \right)}}\left[ { - \sum\limits_{j = 1}^c {\sum\limits_{i = 1}^k {\mathcal{M}_{ji}^{\left( {{\eta _{\max }}} \right)} \cdot \left( {1 - 2{b_{ji}}} \right)} }  + {L_{inc}}} \right],
\end{equation}
Combining the above analysis with the variational formulation in Section III part B, the final loss function is defined as
\begin{align} \label{eq62}
&{L_{NTRSCC}({\phi _h}, {\phi _g}, {\theta _h}, {{\theta _g}}, \varphi, \vartheta )} \buildrel \Delta \over = {\mathbb{E}_{p\left( {\bf{x}} \right)}}{\mathbb{E}_{p\left( \alpha  \right)}}{\mathbb{E}_{p\left( \beta  \right)}}{\mathbb{E}_{\pi \left( {{\sigma ^2}} \right)}} \notag\\
&\left\{ {{\rm{MSE}}\left( {{\bf{x}},{\bf{\hat x}}} \right) + \alpha \left[ {\sum\limits_{j = 1}^c {{n_j}\hat D_{KL}^j}  - \ln p({\bf{\tilde z}})} \right] + \beta \sum\limits_{j = 1}^c {{C_j}}  + {L_{inc}}} \right\}
\end{align}
Here, the first term represents the pixel-wise distortion of image, the second term is the total transmission rate weighted by tradeoff parameter $\alpha$, the third term is the BP decoding complexity multiplied by tradeoff parameter $\beta$, and the final term encourages the mean message increase between BP decoding iterations, to facilitate meaningful complexity-distortion tradeoff.

The training algorithm for multi-user case is summarized in Algorithm \ref{algorithm2}.
\RestyleAlgo{ruled}
\begin{algorithm}[t] 
	\caption{NTRSCC Training} \label{algorithm2}
	\textbf{Input:} Image dataset $\mathbf{x}$, tradeoff parameters $p(\alpha)$, $p(\beta)$, channel noise $p(\sigma^2)$, learning rate $\gamma$.\;
	\textbf{Parameters:} ${\phi _h}, {\phi _g}, {\theta _h}, {{\theta _g}}, \varphi, \vartheta $.\;
	
	\textbf{Phase I: NTC Pretraining}\;
	\For{$t=1,\ldots,T_1$}{
		Sample $\mathbf{x} \sim p(\mathbf{x})$.\;
		Compute $L_{NTC}$ with Eq. (\ref{eq63}).\;
		Update $\theta_g, \theta_h, {\phi _g}, {\phi _h}$.\;
	}
	
	\textbf{Phase II: Rateless Pretraining}\;
	\For{$t=1,\ldots,T_2$}{
		Sample $\sigma^2 \sim p(\sigma^2)$.\;
		Construct ${\bf{G}}$ using Eq. (\ref{eq58})-(\ref{eq61}).\;
		Compute ${{\bf{\hat v}}}$ using Eq. (\ref{eq13})-(\ref{eq16}).\;
		Perform BP decoding, compute $L_{LT}$ with Eq. (\ref{eq64}).\;
		Update $\varphi$.\;
	}
	
	\textbf{Phase III: Joint Optimization}\;
	\For{$t=1,\ldots,T_3$}{
		Sample $\sigma^2 \sim \pi(\sigma^2)$, $\mathbf{x} \sim p(\mathbf{x})$, $\alpha \sim p(\alpha)$, $\beta \sim p(\beta)$.\;
		Compute total loss $L_{NTRSCC}$ with Eq. (\ref{eq62}).\;
		Update ${\phi _h}, {\phi _g}, {\theta _h}, {{\theta _g}}, \varphi, \vartheta $.\;
	}
	
	\textbf{return} Trained parameters ${\phi _h}, {\phi _g}, {\theta _h}, {{\theta _g}}, \varphi, \vartheta $.\;
\end{algorithm}

\begin{Remark}
\textbf{(Rate-distortion-complexity tradeoff)} Different users can adjust $\alpha$ and $\beta$ on the fly according to their wireless channel conditions and computing capabilities. Higher $\alpha$ and $\beta$ correspond to stringent communication and computing budgets, respectively, whereas lower $\alpha$ and $\beta$ correspond to loose budgets. The decoder utilizes the scaling function in Eq. (\ref{eq14}). Note that the training process employs $\alpha$ and $\beta$ drawn from uniform distributions (Line 19 of Algorithm \ref{algorithm2}), and hence, the neural network learns generalized mapping that can be directly deployed on heterogeneous receivers that set $\alpha$ and $\beta$ according to their needs.
\end{Remark}

\begin{Remark}
\textbf{(Mapping between delay budgets and tradeoff parameters)} Due to the absence of closed-form expressions for the concrete rate-distortion-complexity bound, a practical way to determine suitable delay budgets is to construct a table of the achievable rate, distortion, and complexity as a function of $\alpha$ and $\beta$ during evaluation, and look up tradeoff parameters for a given distortion level during deployment.
\end{Remark}

\section{Results and Analysis}

In this section, we present numerical results of our proposed NTRSCC scheme to validate its channel-adaptive properties under dynamic channel conditions, as well as its capability to {reduce bandwidth consumption} under a multi-user setting.

\subsection{Experimental Setup}

\subsubsection{Datasets}

In this section, we select two datasets for image classification and reconstruction, namely the ImageNette (TinyImageNet) and CIFAR-10. The TinyImageNet dataset contains 100,000 images of size $64 \times 64$ across 200 object classes, with 500 training images and 50 validation images per class. The CIFAR-10 dataset consists of 60,000 $32 \times 32$ color images in 10 classes with 5,000 training images per class and 10,000 test images. During training and testing, the images are reshaped to $256 \times 256$.

\subsubsection{Comparison Schemes}
We employ several benchmarks defined as follows.
\begin{itemize}
	\item JPEG+LT: a separate source-channel coding scheme where the source coding is JPEG algorithm and the channel coding is LT codes with distribution used in Raptor codes \cite{etesami_raptor_2006}.
	\item NTC+LDPC: a separate source-channel coding scheme where the source coding is NTC and the channel coding is a LDPC code with code rate 0.67, column weight 3 and row weight 6. The number of decoder iterations is set to 10.
	\item NTRSCC (no UEP): identical to NTRSCC scheme but enforces uniform selection of message symbols.
\end{itemize}

\subsubsection{Evaluation Metrics} 
Consider an image which is $H$ pixels in height, $W$ pixels in width and has $C$ channels.
{\begin{itemize}
	\item \textit{Peak signal-to-noise ratio (PSNR):}
	PSNR is a pixel-wise distortion metric widely adopted in image evaluation literature. For an image where each channel as 256 greyscale levels, the PSNR is defined as
	\begin{equation}
		{\text{PSNR}}\left( {{\bf{x}},{\bf{\hat x}}} \right) = 10{\log _{10}}\left[ {255}^2 / {{{\text{MSE}}\left( {{\bf{x}},{\bf{\hat x}}} \right)}} \right],
	\end{equation}
	\item \textit{Bits per pixel (BPP):}
	Bits per pixel reflects how many bits are used to represent the source image, defined as
	\begin{equation}
	{\text{BPP}} = \left[ {n - \log p({\bf{\tilde z}})} \right]/HW.
	\end{equation}
	The numerator part represents the number of successfully received bits. The denominator part represents the number of pixels for the uncompressed source image.
	\item \textit{Operations per pixel (OPP):}
	Operations per pixel represents how many arithmetic operations are used during channel decoding, defined as
	\begin{equation}
		\text{OPP} = {\text{Comp}}/{H W}.
	\end{equation}
\end{itemize}
}

\subsubsection{Training Details}
In our experiments, all images are resized to $H=256$ and $W=256$ pixels. During training, we set $K=6$, where $\alpha _j\sim{\cal U}\left( {0,4} \right)$, $\beta _j\sim{\cal U}\left( {0,16} \right)$, $j = 1,2, \ldots ,K$
We set $c=64$, $N = 192$, $k=256$. The maximum encoding degree for rateless symbols is set to $d_{max}=16$. The number of training epochs is set to 20, the batch size is set to 64 and we employ the Adam optimizer with learning rate $3{\rm{e}} - 4$.

\subsection{Rate-Distortion Tradeoffs}

\begin{figure}[t]
	\setlength{\belowcaptionskip}{-0.3cm}
	\vspace{-10pt} 
	\centering
	\subfigure[PSNR versus BPP for CIFAR10 dataset.]
	{\includegraphics[width=6cm]{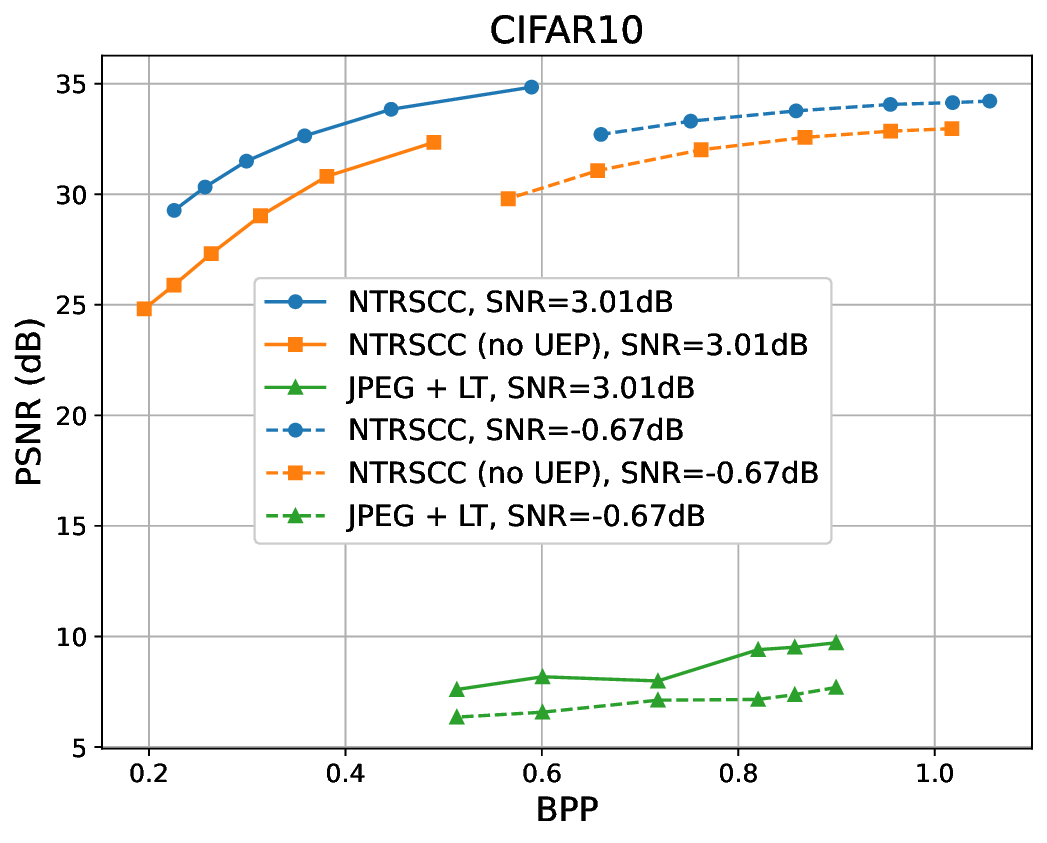}
		\label{fig6a}}
	\subfigure[PSNR versus BPP for ImageNette dataset.]
	{\includegraphics[width=6cm]{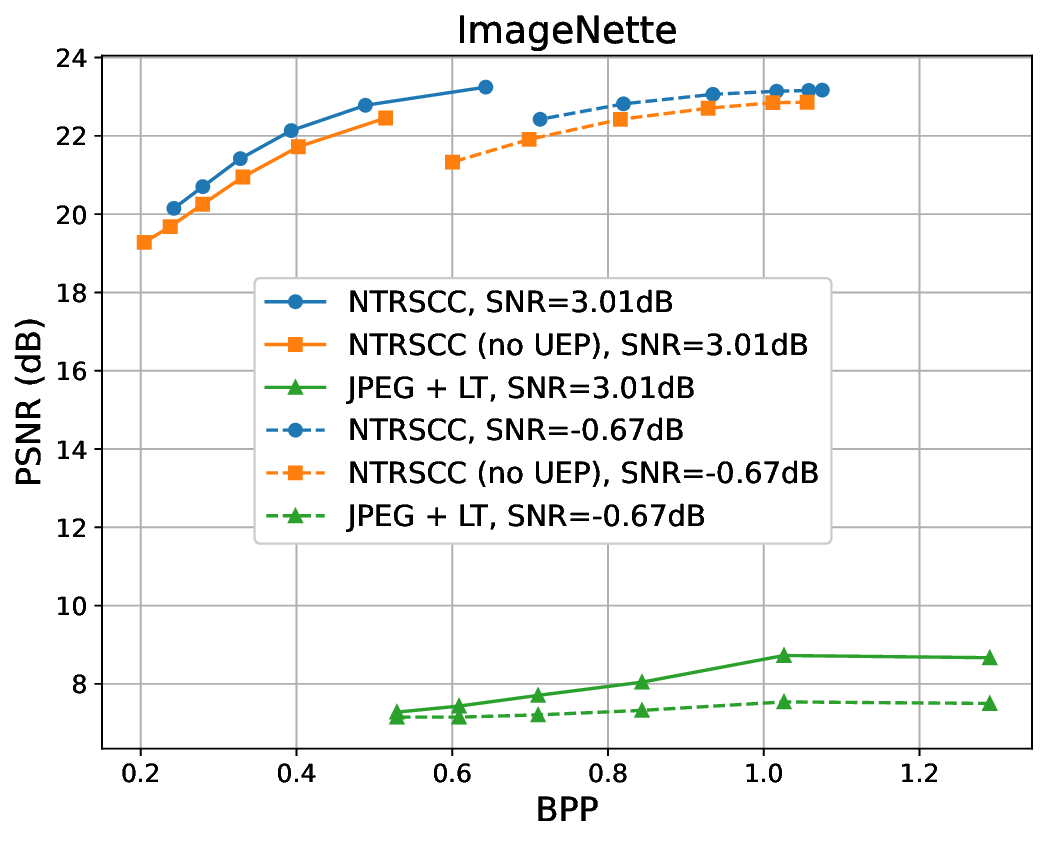}
		\label{fig6b}}
	\caption{Rate-distortion tradeoff comparisons.}
	\label{fig6}
\end{figure}
Fig. \ref{fig6} compare the achievable rate-distortion tradeoffs between NTRSCC, its uniform selection version, as well as the separation-based JPEG+LT. It could be seen that our proposed NTRSCC has the best PSNR performance when BPP is low, and is able to adjust BPP values in order to maintain task performance as channel worsens.
By comparison, JPEG+LT scheme has the worst performance across all data and channel conditions. This is due to the highly sensitive nature of traditional entropy coding to bit errors, where even slight corruptions at crucial segments could result in decoding failure of Huffman tables. Moreover, the JPEG algorithm itself does not provide bit-level prior information to the BP decoder during decoding, which results in higher channel decoding error probability compared to the other methods, further limiting its capabilities at low BPP and OPP.
Specifically, for the CIFAR10 dataset, when {SNR}=-0.67dB, {BPP}=0.71, the performance of NTRSCC is about 1.29dB higher than the uniform selection version, and 26.2dB higher than JPEG+LT.

\subsection{Complexity-Distortion Tradeoffs}
\begin{figure}[t]
	\setlength{\belowcaptionskip}{-0.3cm}
	\vspace{-10pt} 
	\centering
	\subfigure[PSNR versus OPP for CIFAR10 dataset.]
	{\includegraphics[width=6cm]{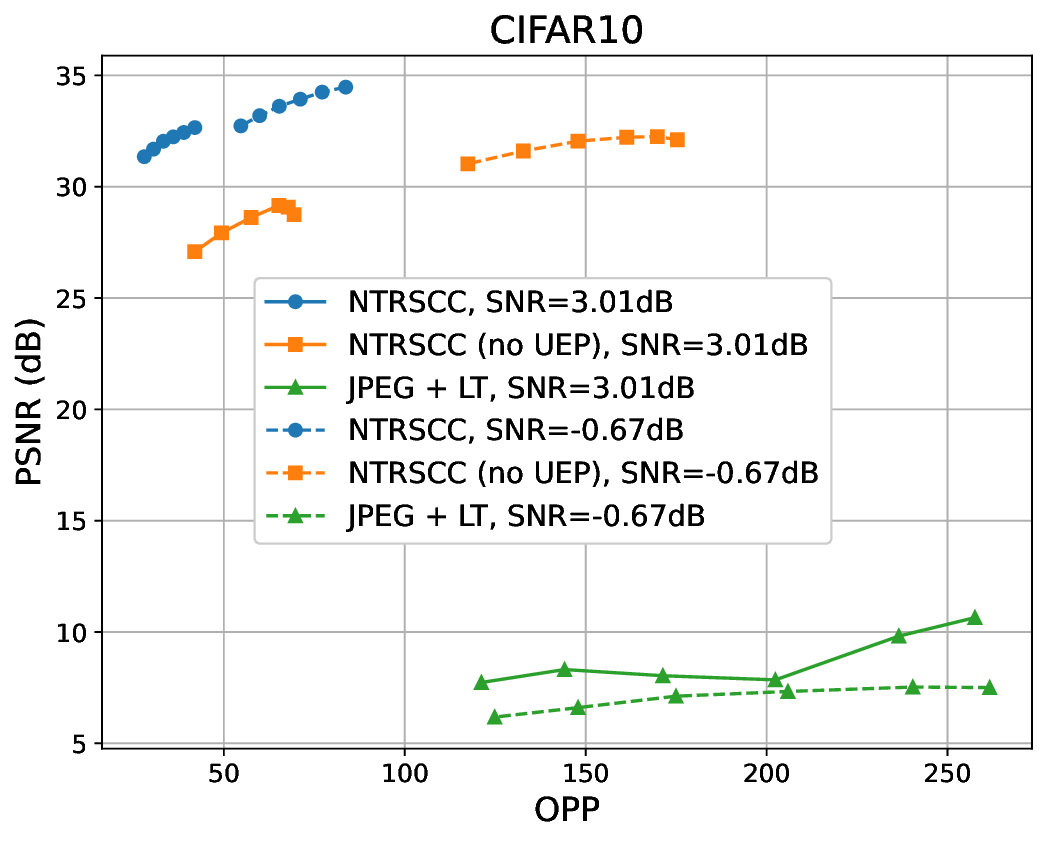}
		\label{fig7a}}
	\subfigure[PSNR versus OPP for ImageNette dataset.]
	{\includegraphics[width=6cm]{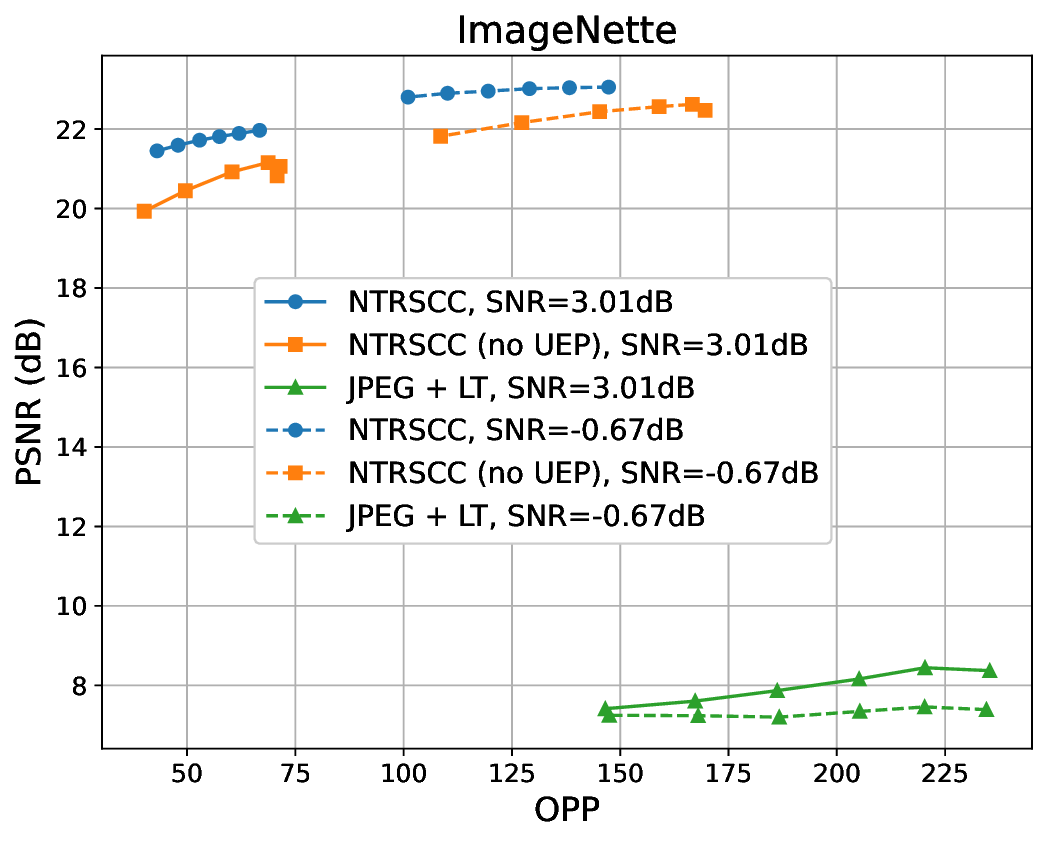}
		\label{fig7b}}
	\caption{Complexity-distortion tradeoff comparisons.}
	\label{fig7}
\end{figure}
Fig. \ref{fig6} compare the complexity-distortion tradeoffs between NTRSCC, and the benchmarks. Once again, NTRSCC has the best PSNR performance when BPP and OPP is low, and is able to adjust OPP in order to maintain the tradeoff level as channel worsens. The JPEG+LT method has the worst performance due to the suboptimality of separate designs.
Specifically, for the ImageNette dataset, when {SNR}=-0.67dB, {OPP}=145, the performance of NTRSCC is about 0.98dB higher than the uniform selection version, and 15.6dB higher than JPEG+LT.

\subsection{Efficient broadcast}

\begin{figure}[t]
	\setlength{\belowcaptionskip}{-0.25cm}
	\vspace{-10pt} 
	\centering
	\subfigure[Bits per pixel versus $K$ for CIFAR10 dataset.]
	{\includegraphics[width=6cm]{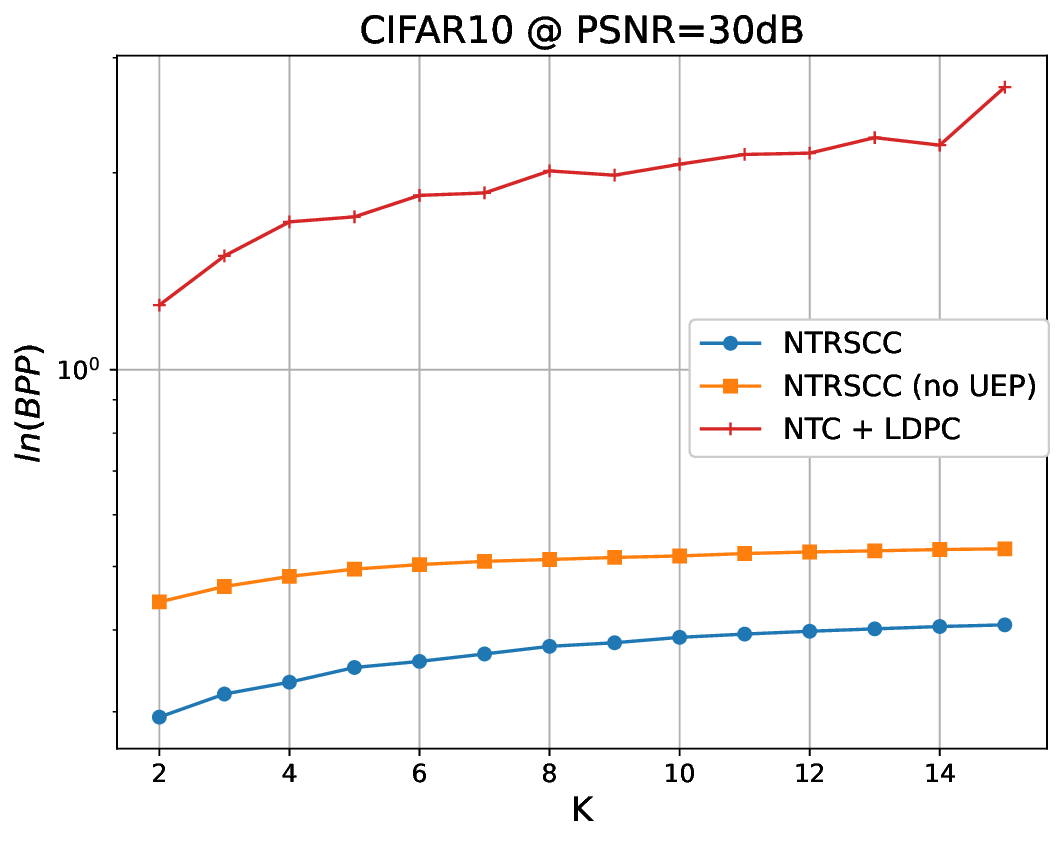}
		\label{fig8a}}
	\subfigure[Bits per pixel versus $K$ for ImageNette dataset.]
	{\includegraphics[width=6cm]{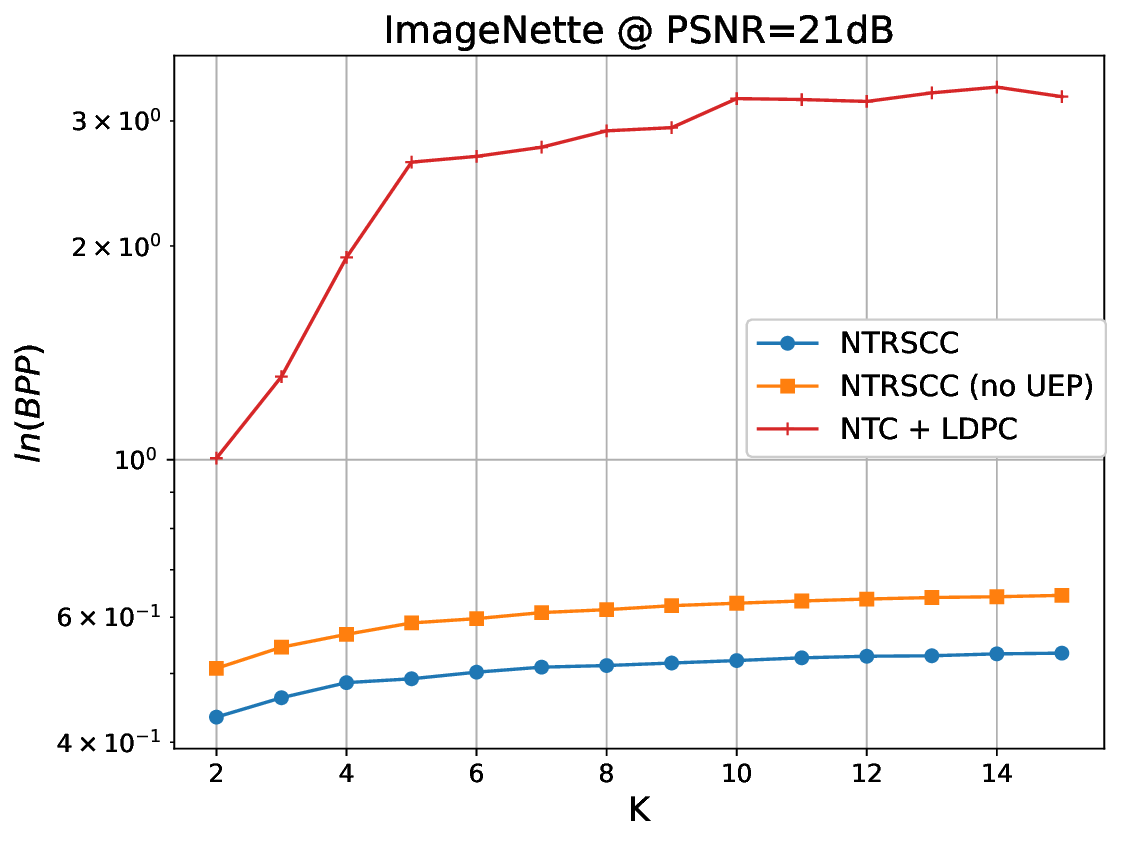}
		\label{fig8b}}
	\caption{broadcasting efficiency comparisons}
	\label{fig8}
\end{figure}

In this part we compare the number of coded symbols that has to be transmitted by the encoder in order to address all the connected users with respect to a certain performance threshold, under different numbers of users $K$. For CIFAR10 dataset, the threshold is set to \text{PSNR} = 30dB. For ImageNette dataset, the threshold is set to \text{PSNR} = 21dB. The channel noise for each user is sampled from distribution $\sigma _j^2\sim{\cal U}\left( {0,2} \right)$, $j = 1,2, \ldots ,K$. Furthermore, the NTC+LDPC employs a retransmission mechanism, that resend and averages feature channels with non-zero BER when the PSNR performance fails to reach the given threshold.
The results indicate that for both datasets, the proposed NTRSCC has the best broadcasting efficiency, which is attributed to its rateless nature as well as efficient UEP design.
The NTC+LDPC scheme has the worst broadcasting efficiency, since it is unable to address heterogeneous loss patterns and noise levels at all users.

\section{Conclusion}
In this work, we proposed NTRSCC for adaptive broadcast over heterogeneous wireless edge users.
First, we investigated the schematic framework of combining NTC with LT codes, formulating its design from a variational inference perspective. Then, we explored the design of rateless parameters for LT codes when bit-level side information is available at the decoder, accounting for informativeness, decoding success, and complexity. Finally, we provided approximations to the LT transmission process to allow end-to-end optimization, as well as detailed network architecture and training procedures for NTRSCC. The effectiveness of our method has been demonstrated through numerical experiments, allowing improved flexibility in balancing distortion, rate, and complexity in broadcasting. Future work could investigate the use of higher-order coding techniques and more efficient latent quantization methods.

\balance
\small
\bibliographystyle{IEEEtran}%
\bibliography{SecureFountain}

\end{document}